\newtheorem{theorem}{Theorem}[section]
\newtheorem{lemma}[theorem]{Lemma}
\newtheorem{corollary}[theorem]{Corollary}
\newtheorem{observation}[theorem]{Observation}
\theoremstyle{definition}
\newtheorem{definition}[theorem]{Definition}
\DeclareMathOperator{\lcm}{lcm}
\begin{document}
\title{Distributed Half-Integral Matching and Beyond}

\author{
    Sameep Dahal and Jukka Suomela \\
    Aalto University
}

\date{}

\maketitle

\begin{abstract}
    By prior work, it is known that any distributed graph algorithm that finds a \emph{maximal matching} requires $\Omega(\log^* n)$ communication rounds, while it is possible to find a \emph{maximal fractional matching} in $O(1)$ rounds in bounded-degree graphs.
    However, all prior $O(1)$-round algorithms for maximal fractional matching use arbitrarily fine-grained fractional values. In particular, none of them is able to find a \emph{half-integral} solution, using only values from $\{0, \frac12, 1\}$.
    We show that the use of fine-grained fractional values is necessary, and moreover we give a \emph{complete characterization} on exactly how small values are needed: if we consider maximal fractional matching in graphs of maximum degree $\Delta = 2d$, and any distributed graph algorithm with round complexity $T(\Delta)$ that only depends on $\Delta$ and is independent of $n$, we show that the algorithm has to use fractional values with a denominator at least $2^d$. We give a new algorithm that shows that this is also sufficient.
\end{abstract}

\section{Introduction}

By prior work, it is known that there is a distributed graph algorithm that finds a \emph{maximal fractional matching} (see \cref{ssec:intro-fract}) in $O(\Delta)$ rounds in graphs of maximum degree $\Delta$ \cite{VC-SC}; in particular, the running time is independent of $n$ and only depends on $\Delta$. However, the algorithm uses very fine-grained fractional values; when $\Delta$ increases, the denominators grow exponentially fast. In this work we show that this is necessary: any distributed graph algorithm that finds a maximal fractional matching in $T(\Delta)$ rounds, independently of $n$, has to use fractional values with a denominator at least $2^{\lfloor \Delta/2 \rfloor}$ (and this is tight). In particular, there cannot be a $T(\Delta)$-rounds algorithm for finding a maximal \emph{half-integral} matching.

\subsection{Distributed maximal matching is hard}\label{ssec:intro-int}

Maximal matching is one of the classic problems in the field of distributed graph algorithms, studied extensively since the very early days of the field in the 1980s \cite{%
    ISRAELI-ITAI,%
    MM3,%
    Improved-Deterministic-Distributed-Matching-via-Rounding,%
    Conference-Locality-of-Symmetry-Breaking,%
    ACM-Journal-Locality-of-Symmetry-Breaking,%
    MM2,
    MM4%
}. In the maximal matching problem, the task is to find a matching (a set of edges without common vertices) that is not a strict subset of another matching. This is something one can trivially find in a centralized setting (pick independent edges greedily until you are stuck), but this is a challenging coordination task in a distributed setting, for two reasons:
\begin{enumerate}
    \item One has to \emph{break symmetry}. For example, if the input graph is a cycle, one has to select some but not all edges---the input is symmetric, but the output is not. The task is not solvable at all without resorting to, e.g., unique identifiers or randomness, and even then we cannot solve the task in constant number of rounds; maximal matching in cycles requires $\Omega(\log^* n)$ rounds \cite{Linial-Seminal-Paper,Naor1991}.
    \item One has to solve a \emph{local coordination} task. Even if we have a $\Delta$-regular bipartite graph, with the bipartition given, we still need $\Omega(\Delta)$ rounds to find a maximal matching, at least in sufficiently large graphs \cite{MM4}.
\end{enumerate}
On the positive side, $O(\Delta + \log^* n)$-round distributed algorithms for finding a maximal matching in a graph of maximum degree $\Delta$ are known \cite{MM3}; one can also make different trade-offs between dependency on $\Delta$ vs.\ $n$ \cite{%
    Improved-Deterministic-Distributed-Matching-via-Rounding,%
    Conference-Locality-of-Symmetry-Breaking,%
    ACM-Journal-Locality-of-Symmetry-Breaking%
}, but it is impossible to achieve a running time of $T(\Delta)$, independent of $n$ \cite{Linial-Seminal-Paper,Naor1991}. All of these results hold in the usual LOCAL model of distributed computing (see \cref{ssec:prelim-model} for the details).

\subsection{Distributed fractional matching is easier}\label{ssec:intro-fract}

A matching $M \subseteq E$ in a graph $G = (V,E)$ can be interpreted as a function $x$ that assigns value $x(e) = 1$ to each edge $e \in M$. If we let \[x[v] = \sum_{e \in E: v \in e} x(e)\] denote the sum of labels on edges incident to node $v \in V$, then we can define that function $x\colon E \to \{0,1\}$ is a \emph{matching} if $x[v] \le 1$ for all $v \in V$. Moreover, $x$ is a \emph{maximal matching} if for each edge $\{u,v\} \in E$ at least one endpoint is \emph{saturated}, i.e., $x[u] = 1$ or $x[v] = 1$. Finally, $x$ is a \emph{maximum matching} if it maximizes $\sum_e x(e)$.

\begin{figure}[b]
    \centering
    \includegraphics[page=1]{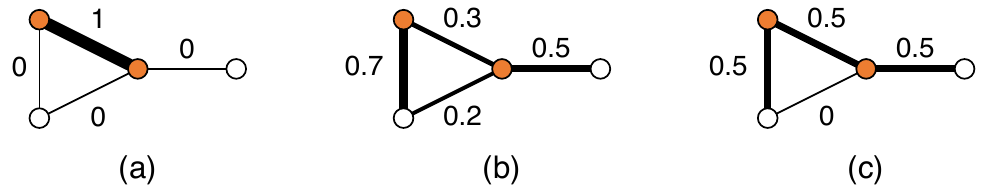}
    \caption{(a) A maximal matching. (b) A maximal fractional matching. (c) A maximal half-integral matching. The orange nodes are saturated.}\label{fig:mm-mfm}
\end{figure}

We can now also consider the \emph{fractional relaxation} of this integer program. We say that $x\colon E \to [0,1]$ is a \emph{fractional matching} if it satisfies $x[v] \le 1$ for each $v \in V$, it is a \emph{maximal fractional matching} if $x[u] = 1$ or $x[v] = 1$ for each edge $\{u,v\} \in E$, and it is a \emph{maximum fractional matching} if it maximizes $\sum_e x(e)$. See \cref{fig:mm-mfm} for illustrations.

Note that any maximal matching is also a maximal fractional matching, but the converse is not necessarily true. However, maximal fractional matchings share many useful properties of maximal matchings. For example, the set of saturated nodes forms a $2$-approximation of a minimum vertex cover \cite{VC-Approx-Linear-Time}.

When we consider distributed graph algorithms for maximal fractional matchings, one of the obstacles discussed in \cref{ssec:intro-int} goes away: \emph{we do not need to break symmetry}. For example, if the graph is a cycle, we can simply label all edges with $1/2$. More generally, if we have a $d$-regular graph, we can label all edges with $1/d$. The lower bound of $\Omega(\log^* n)$ from \cite{Linial-Seminal-Paper,Naor1991} for symmetry-breaking problems no longer applies.

While the case of non-regular graphs is much more challenging, it is nevertheless possible to design distributed algorithms that find a maximal fractional matching in $O(\Delta)$ rounds, independently of $n$ \cite{VC-SC}. It is also known that the local coordination challenge does not disappear; $o(\Delta)$-round algorithms do not exist \cite{Linear-in-Delta}.

\subsection{What about half-integral matchings?}\label{ssec:intro-half}

The fractional matching polytope is \emph{half-integral} (see e.g.\ \cite[Section 30.3]{schrijver-book}). That is, there exists a maximum fractional matching in which $x(e) \in \{0, \frac12, 1\}$ for every edge $e \in E$.

There is also a simple distributed strategy that at first seems to lead to half-integral solutions (see e.g.\ \cite{Delta-squared}). First, construct the \emph{bipartite double cover} $G' = (V',E')$ of the graph $G = (V,E)$: for each node $v \in V$ we have two nodes $v_1$ and $v_2$ in $V'$, and for each edge $\{u,v\} \in E$ we have two edges $\{u_1, v_2\}$ and $\{u_2, v_1\}$ in $E'$. Now $G'$ is bipartite, and we know the bipartition, with nodes $v_1$ on one side and nodes $v_2$ on the other side. We can now apply any algorithm that finds a matching $x'$ in the bipartite graph $G'$, and this can be mapped into a half-integral matching $x$ by setting
\begin{equation}\label{eq:intro-half}
    x[\{u,v\}] = \frac{x'[\{u_1,v_2\}] + x'[\{u_2,v_1\}]}{2}.
\end{equation}
Hence, we could use any distributed algorithm designed for bipartite graphs---there is a very simple algorithm that finds a maximal matching in bipartite graphs in $O(\Delta)$ rounds independently of $n$. Then by applying \eqref{eq:intro-half} we could turn it into a fractional matching.

There is, unfortunately, a catch: while \eqref{eq:intro-half} will preserve feasibility (given a matching $x'$ it will result in a fractional matching $x$), it will not preserve maximality: even if $x'$ is a maximal matching, it is not necessarily the case that $x$ is a maximal fractional matching. Could we nevertheless find a half-integral matching efficiently with a distributed algorithm?

If we consider prior distributed algorithms for maximal fractional matching \cite{Delta-squared,VC-SC}, they are very far from being able to produce half-integral matchings. For example, \cite{Delta-squared} uses fractional values with denominators as large as $2^{\Delta-1}$ and \cite{VC-SC} is even worse. In this work we show that denominators exponential in $\Delta$ are necessary, but we can still do better than prior work.

\subsection{Contributions}

Our main result is a full characterization of exactly how fine-grained fractional values are necessary, for any algorithm whose running time is independent of $n$, or at most $o(\log^* n)$ as a function of $n$:
\begin{theorem}[Upper bound]\label{thm:upper}
    Fix any constant $d$, and let $\Delta = 2d+1$. Then there is a distributed algorithm that runs in $5\Delta^3$ rounds and that finds a maximal fractional matching in graphs of maximum degree $\Delta$ using only fractional numbers of the form $a/b$ where $a = 0, 1, \dotsc, 2^d$ and $b = 2^d$.
\end{theorem}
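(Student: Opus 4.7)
\medskip
\noindent\emph{Proof proposal.} The plan is to build the fractional matching layer by layer in $d$ phases, indexed $k=1,\dots,d$, where phase $k$ contributes increments that are integer multiples of $1/2^k$, so that the final values are multiples of $1/2^d$. The basic building block in each phase is the bipartite double cover construction from \cref{ssec:intro-half}: given a subgraph $H\subseteq G$, the double cover $H'$ is formed at no communication cost, a maximal matching $M'$ in $H'$ is computed with a known bipartite-maximal-matching subroutine (which runs in $O(\Delta^2)$ rounds independently of $n$ because the bipartition of $H'$ is known for free), and $M'$ is translated back via \eqref{eq:intro-half} into an increment $\delta(e)\in\{0,\tfrac12,1\}$ on every edge of $H$.

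Concretely, the algorithm maintains a fractional matching $x$ (initially $\equiv 0$) and a set of \emph{active} vertices (initially $V$), deactivating any vertex the instant $x[v]=1$. In phase $k$, apply the building block to the subgraph $H_k$ of $G$ induced by the currently active vertices, and add $\delta(e)/2^{k-1}$ to $x(e)$ for every $e\in E(H_k)$. Because $\delta(e)/2^{k-1}$ is a multiple of $1/2^k$, after $d$ phases every $x(e)$ is a multiple of $1/2^d$, fulfilling the denominator requirement.

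The heart of the argument is a per-phase progress bound: for every edge $\{u,v\}$ still in $H_k$, the combined deficit $(1-x[u])+(1-x[v])$ drops by at least $1/2^{k-1}$ during phase $k$. To see this, let $m_i^u, m_i^v\in\{0,1\}$ indicate whether $u_i, v_i$ are matched in $M_k'$; maximality of $M_k'$ in $H_k'$ forces at least one endpoint of each of the two lifted edges $\{u_1,v_2\}$ and $\{u_2,v_1\}$ to be matched, so $m_1^u+m_2^u+m_1^v+m_2^v\ge 2$, and the combined increment on $u$ and $v$ coming from the averaging rule equals $(m_1^u+m_2^u+m_1^v+m_2^v)/2^k\ge 1/2^{k-1}$. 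Summing the geometric series $\sum_{k=1}^d 1/2^{k-1}=2-1/2^{d-1}$, the combined deficit of any surviving edge after phase $d$ is at most $1/2^{d-1}$. Since an unresolved edge at the end would have both deficits at least $1/2^d$, this would force every edge to be resolved and $x$ to be a maximal fractional matching.

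The main obstacle is the tight boundary case where the bound $1/2^{d-1}$ only marginally beats the unresolved-edge threshold $2/2^d=1/2^{d-1}$: if the ``symmetric split'' configuration (where exactly one of $u_1,u_2$ and exactly one of $v_1,v_2$ is matched) keeps occurring in every phase, the inductive slack collapses. Handling this cleanly will probably require either tightening the per-phase case analysis by imposing a preferred tie-breaking rule in the bipartite subroutine that excludes this configuration on already-tight edges, or appending a short cleanup phase that deals separately with the small number of boundary edges. Once this is resolved, the round-complexity bound $5\Delta^3$ follows from $d\le\Delta/2$ phases at $O(\Delta^2)$ rounds each, plus a constant amount of bookkeeping to update the set of active vertices.
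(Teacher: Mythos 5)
There is a genuine gap, and it is exactly the one the paper flags in \cref{ssec:intro-half}: the double-cover-plus-averaging rule of \eqref{eq:intro-half} preserves feasibility but \emph{not} maximality, and your geometric series does not close this gap. Your per-phase bound gives total deficit reduction $\sum_{k=1}^{d} 2^{-(k-1)} = 2 - 2^{-(d-1)}$, so the final deficit can be $2^{-(d-1)}$; but an unresolved edge whose endpoints both have value $1-2^{-d}$ has deficit exactly $2^{-(d-1)}$. So the bound is \emph{exactly} tight, and the tight case actually occurs. Concretely, already for $d=1$ (so $\Delta = 3$, one phase) take $G = C_6$ with vertices $a,\dots,f$; its double cover is two disjoint $6$-cycles, and choosing the maximal matchings $\{\{b_2,c_1\},\{e_1,f_2\}\}$ and $\{\{a_2,b_1\},\{d_1,e_2\}\}$ yields, after averaging, $x[c]=x[d]=\tfrac12$ with $x(\{c,d\})=0$, so the edge $\{c,d\}$ violates maximality. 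Your proposed fixes (tie-breaking in the bipartite subroutine, or a cleanup phase) are not fleshed out, and there is reason to doubt they can work within the stated resource bounds: the failure is structural, arising because the saturated set in $G'$ projects to a half-integral vertex cover of $G$ but not an integral one, and no local tie-break in a symmetric cycle can upgrade it.

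It is also worth noting that the approach you take is genuinely different from the paper's. The paper's proof is an induction on $\Delta$ with two distinct inductive steps: an ``odd'' step $\mathcal{G}_{2d} \to \mathcal{G}_{2d+1}$ that uses a port-number-based edge labelling to peel off a degree-$\le 2$ subgraph of ``End'' edges and handles it with a cleanup pass, keeping the denominator at $2^d$; and an ``even'' step $\mathcal{G}_{2d+1} \to \mathcal{G}_{2d+2}$ that first finds an \emph{almost-saturating} half-integral solution (following \cite{Delta-squared}), recurses on the induced subgraph of half-saturated edges (which has maximum degree $\le 2d+1$), and halves-and-adds the recursive solution, doubling the denominator. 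The crucial point, which your phase structure misses, is that the denominator is only forced to double once per \emph{two} degree increments; the odd step is a ``free'' degree increase. Your approach doubles per phase and would at best match the older $2^{\Delta-1}$ bound, not the improved $2^{\lfloor\Delta/2\rfloor}$ bound of \cref{thm:upper}. To salvage a phased strategy you would need both a replacement for the double-cover averaging that preserves maximality and a mechanism to absorb one extra degree unit per phase without paying in the denominator.
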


\begin{theorem}[Lower bound]\label{thm:lower}
    Fix any constant $d$, and let $\Delta = 2d+2$. Then there is no distributed algorithm that runs in $o(\log^* n)$ rounds and that finds a maximal fractional matching in graphs of maximum degree $\Delta$ using only fractional numbers of the form $a/b$ where $a = 0, 1, \dotsc, 2^d$ and $b = 1, 2, \dotsc, 2^d$.
\end{theorem}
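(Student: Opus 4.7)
The plan is to apply the classical Ramsey-theoretic template for $\Omega(\log^* n)$ lower bounds in the LOCAL model, originally due to Linial and Naor--Stockmeyer. Suppose toward contradiction that an algorithm $A$ solves the problem in $T(n) = o(\log^* n)$ rounds on every graph of maximum degree $2d+2$, producing outputs from the finite label set $S = \{a/b : 0 \le a \le 2^d,\ 1 \le b \le 2^d\} \cap [0,1]$. Because $|S|$ is a constant, the Ramsey argument extracts from $A$ a constant-round \emph{order-invariant} algorithm $A^\star$ whose output on each edge depends only on the order type of the identifiers in a fixed-radius ball around that edge.

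The next step is to exhibit a family of hard $(2d+2)$-bounded-degree instances on which $A^\star$ must fail. A natural first candidate is a $(2d+2)$-regular graph of sufficiently high girth: such a graph is locally indistinguishable from the $(2d+2)$-regular tree, so by order-invariance $A^\star$ assigns the same value $v \in S$ to every edge, and the saturation constraint at every vertex forces $(2d+2)v = 1$, i.e., $v = 1/(2d+2)$. For $d \le 2$ we have $2d+2 > 2^d$, so $v \notin S$ and we are done immediately. For $d \ge 3$ the value $1/(2d+2)$ happens to lie in $S$, so a subtler hard instance is needed---one whose edge set decomposes into a small constant number of orbits under its automorphism group (e.g., a circulant $C_n(1,2,\dots,d+1)$ or a suitable bipartite $(2d+2)$-regular graph of large girth). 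On such an instance, $A^\star$'s output is captured by a constant-size tuple $(v_1,\dots,v_k) \in S^k$, and the saturation equation at every interior vertex becomes a single linear equation $\sum_j c_j v_j = 1$ with nonnegative integer coefficients $c_j$ summing to $2d+2$.

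The final step is to show, by a finite case analysis, that no tuple in $S^k$ satisfies the resulting saturation and maximality system. The main obstacle is precisely this step for $d \ge 3$: because the allowed denominator set $\{1,\dots,2^d\}$ is much richer than the singleton $\{2^d\}$ appearing in \cref{thm:upper}, many seemingly distinct candidate tuples must be ruled out (e.g., setting some $v_j = 0$ and redistributing mass onto a second orbit, or mixing several denominators within a single orbit). The likely key lever is the even parity of the degree $\Delta = 2d+2$, captured by a $2$-adic valuation argument along a shift-orbit: every element of $S$ has $2$-adic valuation at least $-d$, so the fractional parts of the $v_j$'s cannot cancel out around a $(2d+2)$-vertex in a way that leaves a total of exactly $1$. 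Making this obstruction tight at $\Delta = 2d+2$, while knowing it must fail at $\Delta = 2d+1$ (as witnessed by \cref{thm:upper}), is the technical heart of the proof.
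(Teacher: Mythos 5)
Your high-level plan has the right two ingredients---reduce to an order-invariant algorithm and then use a $2$-adic valuation argument on the saturation constraints---but the middle step, constructing a hard instance, is where your proposal would not go through, and this is exactly where the paper's novelty lies.

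The paper explicitly rules out your candidate instances: on any $(2d+2)$-regular (or regular-up-to-orbits) graph, the algorithm can simply assign $1/(2d+2)$ to every edge, and you have already observed that for $d\ge 3$ this value lies in $S$. Circulants and high-girth bipartite regular graphs have the same defect: by symmetry there is always a uniform (or nearly uniform) feasible labeling that never uses a large power of $2$ in the denominator. Your hope that a linear system $\sum_j c_j v_j=1$ over a few orbits has no solution in $S^k$ fails because the constant solution always exists. The actual construction must instead create a cascade of \emph{degrees}, so that the $2$-adic obstruction accumulates over several nodes rather than being absorbed at one vertex. The paper does this inductively: start from a single node with $d$ self-loops (so $2a_1+\dotsb+2a_d=1$ already forces one $a_i$ to pick up a factor of $2$ in its denominator), identify that loop, unfold it into a long directed path of copies, and repeat. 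Each unfolding removes one loop but adds a path edge that already carries a ``$2$-adically deep'' label, and the observation about sums equal to $1$ (your valuation lemma, made precise as the paper's \cref{obs:even-denom}) then forces either a loop with a strictly deeper denominator---advancing the induction---or a neighbor carrying an equally deep value, in which case the argument walks outward along the tree until a leaf forces the former. After $d-1$ steps you reach a graph where the algorithm must use a denominator divisible by $2^d$, which lies outside $S$. There is no single linear equation at a single vertex that achieves this; the non-regularity and the induction across $d$ graphs is essential.

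One further point: your route from $o(\log^* n)$ to constant-round order-invariant via Ramsey is fine in spirit, but to make it rigorous you need the problem to be an LCL with a \emph{finite} label alphabet and to invoke the speed-up result for LCLs on trees (the paper cites Grunau, Rozhon, and Brandt); the paper instead passes through the PO model, where it is cleaner to formalize the ``unfold a self-loop into a path'' indistinguishability argument. Either framing can be made to work, but the self-loop/path construction is the technical heart you are missing, and it depends on the node being able to distinguish ``incoming'' from ``outgoing'' edges along the path, which is exactly the extra structure the PO model provides.
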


We emphasize that the lower bound result holds for any given $d$, and the constants hidden in the $o$-notation may depend on $d$. In particular, it holds for algorithms that run in $T(\Delta) + o(\log^* n)$ for any function $T$. This is also tight: in $O(\Delta + \log^* n)$ rounds we can find an integral solution \cite{MM3}. One surprise here is that there is no middle ground between $o(\log^* n)$-round algorithms that require exponentially fine-grained fractional values and $O(\log^* n)$-round algorithms that can use integral values.

We note that the upper bound only uses multiples of $1/2^d$, while the lower bound also excludes the possibility of finding a maximal matching using, e.g., values that are multiples of $1/d$.

The range of fractional numbers our algorithm uses is much smaller than those in the prior work. In our algorithm, the denominator is upper bounded by $2^{\Delta/2}$, while in prior work \cite{Delta-squared} it is approximately $2^{\Delta}$. See \cref{tab:comparison} for a comparison between different algorithms.

\begin{table}
    \centering
    \begin{tabular}{lll}
        \toprule
        Values & Round complexity & Reference \\
        \midrule
        multiples of $\frac{1}{\Delta!}$ & $O(\Delta)$ & \cite{VC-SC} \\[4pt]
        multiples of $\frac{1}{2^{\Delta-1}}$ & $O(\Delta^2)$ & \cite{Delta-squared} \\[4pt]
        multiples of $\frac{1}{2^{\lfloor \Delta/2 \rfloor}}$ & $O(\Delta^3)$ & this work \\[4pt]
        integral & $O(\Delta + \log^* n)$ & \cite{MM3} \\[2pt]
        \bottomrule
    \end{tabular}
    \caption{Trade-offs between fractional values and round complexity.}\label{tab:comparison}
\end{table}

As a corollary of our results, we also have a full characterization of the complexity of half-integral matchings:
\begin{corollary}\label{cor:upper-half}
    It is possible to find a maximal half-integral matching in graphs of maximum degree $\Delta = 3$ in $O(1)$ rounds.
\end{corollary}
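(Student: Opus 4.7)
The plan is to obtain \cref{cor:upper-half} as a direct instantiation of \cref{thm:upper}. Setting the parameter $d = 1$ in \cref{thm:upper} yields $\Delta = 2d+1 = 3$, so the theorem provides a distributed algorithm that finds a maximal fractional matching on graphs of maximum degree $3$. The allowed fractional values are of the form $a/b$ with $b = 2^d = 2$ and $a \in \{0, 1, \dotsc, 2^d\} = \{0, 1, 2\}$; these are exactly $\{0, \tfrac{1}{2}, 1\}$, so the output is a maximal half-integral matching, as required.

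For the round complexity, the theorem guarantees $5\Delta^3 = 5 \cdot 27 = 135$ rounds when $\Delta = 3$, which is a constant independent of $n$, and therefore $O(1)$. The only minor point to address is that the statement of the corollary covers graphs of maximum degree $\Delta = 3$, which formally includes graphs whose actual maximum degree is strictly less than $3$; this is handled automatically, since any algorithm designed for the class of graphs of maximum degree at most $\Delta$ can be run unchanged on graphs of smaller maximum degree.

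There is no real obstacle here: the corollary is essentially a relabeling of the $d = 1$ case of the main upper bound. The substantive content lies in \cref{thm:upper} itself, whose proof establishes the general trade-off; once that theorem is in hand, the half-integral case falls out immediately, and in particular there is nothing further to verify about maximality, feasibility, or the shape of the fractional values.
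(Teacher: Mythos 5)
Your proposal is correct and follows exactly the route the paper intends: instantiate \cref{thm:upper} with $d=1$ to obtain $\Delta = 3$, values $\{0, \tfrac12, 1\}$, and round complexity $5\cdot 3^3 = 135 = O(1)$. The paper presents \cref{cor:upper-half} as an immediate consequence of \cref{thm:upper} without further argument, so your reasoning matches.
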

\begin{corollary}\label{cor:lower-half}
    It is not possible to find a maximal half-integral matching in graphs of maximum degree $\Delta = 4$ in $o(\log^* n)$ rounds.
\end{corollary}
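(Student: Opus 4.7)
The plan is to derive \cref{cor:lower-half} directly from \cref{thm:lower} by specializing to $d=1$. With $d=1$, \cref{thm:lower} gives $\Delta = 2d+2 = 4$ and rules out $o(\log^* n)$-round algorithms that output fractional matchings whose values are of the form $a/b$ with $a \in \{0,1,2\}$ and $b \in \{1,2\}$. So the first step is simply to compute this set of admissible fractional values.

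The key observation is that any fractional matching value must lie in $[0,1]$, and the fractions $a/b$ with $a \in \{0,1,2\}$, $b \in \{1,2\}$ that fall in $[0,1]$ are exactly $\{0, \tfrac12, 1\}$: indeed, $0 = 0/1 = 0/2$, $\tfrac12 = 1/2$, and $1 = 1/1 = 2/2$, while the remaining combinations (namely $2/1$) lie outside $[0,1]$ and hence are never used. Thus the class of algorithms excluded by \cref{thm:lower} at $d=1$ is precisely the class of algorithms producing \emph{half-integral} maximal matchings in graphs of maximum degree $4$.

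Having made this identification, the argument finishes in one line: a hypothetical $o(\log^* n)$-round algorithm for maximal half-integral matching at $\Delta=4$ would, in particular, be an $o(\log^* n)$-round algorithm for maximal fractional matching using only values of the form $a/b$ with $a \in \{0,1,2\}$, $b \in \{1,2\}$, contradicting \cref{thm:lower}. There is no real obstacle in this proof; the only thing to double-check is that the set of admissible fractions in \cref{thm:lower} with $d=1$, when intersected with $[0,1]$, collapses exactly to $\{0,\tfrac12,1\}$ and does not accidentally permit additional denominators. Once this is verified, the corollary is immediate.
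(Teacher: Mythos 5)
Your proof is correct and matches the paper's intent: \cref{cor:lower-half} is stated as an immediate corollary of \cref{thm:lower} with $d=1$, and you have carefully verified that the permitted fractions $a/b$ with $a\in\{0,1,2\}$, $b\in\{1,2\}$ restricted to $[0,1]$ are exactly $\{0,\tfrac12,1\}$, so ruling them out is the same as ruling out half-integral matchings.
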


Again, this is tight, as one can find not just a half-integral but also an integral solution for any fixed $\Delta$ in $O(\log^* n)$ rounds.

\subsection{Key new ideas}

While the upper bound of \cref{thm:upper} is a relatively simple adaptation of ideas from prior work, the lower bound of \cref{thm:lower} requires a development of a new proof strategy.

Prior lower-bound techniques in this area tend to fall in one of these categories, each unsuitable for us:
\begin{enumerate}
    \item The lower-bound construction is a regular graph \cite{MM4,Greedy-Optimal-Distributed-Maximal-Matching}. In $\Delta$-regular graphs we can trivially find a fractional matching using the value $1/\Delta$, which is exponentially far from the lower bound in \cref{thm:lower} that we aim at proving.
    \item The lower-bound result aims at establishing that one needs some specific number of rounds, e.g., $\Omega(\Delta)$ rounds \cite{MM4,Linear-in-Delta,Greedy-Optimal-Distributed-Maximal-Matching}. However, in \cref{thm:lower} we aim at proving that even if the round complexity is, say, exponential in $\Delta$, one cannot avoid using fine-grained fractional values.
\end{enumerate}
Our proof strategy superficially resembles the one used in \cite{Linear-in-Delta,Greedy-Optimal-Distributed-Maximal-Matching} in the sense that we start with one node and $k$ self-loops, which represents the local view of a node in the middle of a regular graph, and then we start unfolding the loops. At each point of the process we see what is the output the algorithm commits to, and then we continue the process until we are left with a concrete lower-bound graph. However, there are major differences; see \cref{fig:comp-with-prior}:
\begin{figure}[t]
    \centering
    \includegraphics[page=2]{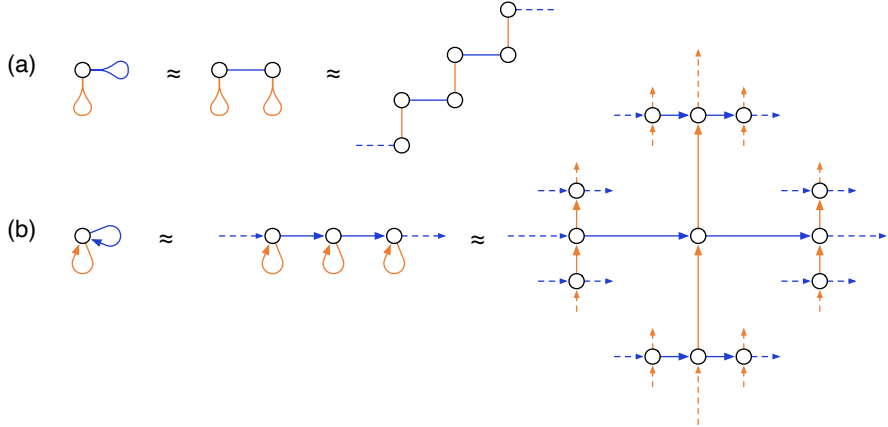}
    \caption{(a) In prior work \cite{Linear-in-Delta,Greedy-Optimal-Distributed-Maximal-Matching}, all the heavy lifting is done in a so-called EC model, in which edges are undirected but colored. Self-loops represent undirected edges. For example, a node with $2$ self-loops represents a node in the middle of a $2$-regular tree, i.e., a long path. (b) In this work, we work in the PO model. Self-loops represent long directed paths. For example, a node with $2$ self-loops represents a node in the middle of a $4$-regular tree in which all nodes have indegree $2$ and outdegree $2$.}\label{fig:comp-with-prior}
\end{figure}
\begin{itemize}
    \item In \cite{Linear-in-Delta,Greedy-Optimal-Distributed-Maximal-Matching} they start with a \emph{pair of nodes}. The nodes have self-loops, and each self-loop represents an \emph{undirected edge}; the entire argument relies on the fact that an algorithm cannot break symmetry between two ends of such an edge. At each step they unfold a relevant loop, doubling the number of nodes, and then they mix elements from two instances, resulting in another pair of instances. In each iteration they lose one self-loop but force the algorithm to look one step further.
    \item In this work we start with a \emph{single node}. The node has self-loops, but this time each self-loop represents a \emph{long directed path}; our argument relies on the fact that an algorithm cannot break local symmetry between two nodes near the middle of the path. At each step we unfold a relevant loop, but this will turn one node into a directed path of length $\Theta(T)$. We are interested in the behavior of the algorithm both in the middle of the path and at the endpoints of the path. In each iteration we lose one self-loop but force the algorithm to use at least twice as large denominators.
\end{itemize}

\section{Preliminaries}

\subsection{Graphs and self-loops}

For a graph $G=(V,E)$, we write $\Delta(G)$ to denote the maximum degree of the graph. We use just $\Delta$ when $G$ is clear from the context. For any natural number $d \in \mathbb{N}$, we use $\mathcal{G}_{d}$ to represent the family of graphs such that $G \in \mathcal{G}_{d}$ if $\Delta(G) \le d$. Throughout this work, we will assume that the maximum degree of the input graph $G$ is a globally known constant.

In what follows, we will refer to a self-loop simply as a loop. Each loop counts as one incoming and one outgoing edge (in particular, in $G_{2d}$ a node can have at most $d$ self-loops). We call a graph \emph{loopy} if each vertex of the graph has at least one loop.

\subsection{Model of computing}\label{ssec:prelim-model}

Our main results, \cref{thm:upper,thm:lower}, hold in the usual LOCAL model \cite{Linial-Seminal-Paper,Peleg-book}. For simplicity, we will focus here on deterministic algorithms (even though the results are not hard to extend to randomized algorithms).

However, to prove the lower bound result, it will be convenient to first prove the lower bound in a weaker model (called PO here, following \cite{Linear-in-Delta}) and then extend the result from the PO model to the LOCAL model. It will be easiest to define everything we need by starting with the deterministic port-numbering model (PN).

\paragraph{PN model (port numbering) \cite{angluin80local,yamashita96computing}.}

Let $G = (V,E)$ be the input graph. In the PN model, each node $v \in V$ is a computer and each edge $\{u,v\} \in E$ is a communication link between two computers. Initially, each computer is only aware of its degree; nodes of the same degree start with the same initial state.

The endpoints of the edges are labeled with \emph{port numbers}; a node of degree $d$ can refer to its incident edges with the numbers $1, 2, \dotsc, d$; see \cref{fig:models}. The port numbering comes from an adversary; a distributed algorithm in the PN model has to work correctly for any given port numbering.

Computation proceeds in \emph{synchronous communication rounds}. In each round, each node can
\begin{enumerate}[noitemsep]
    \item send a message to each neighbor,
    \item receive a message from each neighbor, and
    \item update its local state based on the current state and the messages it received.
\end{enumerate}
After each round, each node can decide whether it stops and announces its own part of the output---in the case of the maximal fractional problem, the output of a node indicates the fractional value assigned to each incident edge. The \emph{running time} of the algorithm is the number of rounds until all nodes have stopped and announced their local outputs.

\paragraph{PO model (port numbering and orientation) \cite{mayer95local,Linear-in-Delta}.}

Algorithms in the PO model behave in exactly the same way as in the PN model. However, there is one additional piece of information available to the algorithm: each edge $\{u,v\} \in E$ is oriented (arbitrarily, by the adversary); see \cref{fig:models}. More precisely, each node knows for each incident edge whether it is ``outgoing'' or ``incoming''.

While an arbitrary orientation may not seem particularly useful, note that the PO model is strictly stronger than the PN model. For example, if we have a graph $G$ with two nodes and one edge, it is trivial to find a proper $2$-coloring of $G$ in the PO model in $0$ rounds, while it is impossible to solve in the PN model in any number of rounds.

\paragraph{LOCAL model \cite{Linial-Seminal-Paper,Peleg-book}.}

Algorithms in the LOCAL model also behave in exactly the same way as in the PN model, but there is again one additional piece of information available to the algorithm: each node is labeled (arbitrarily, by the adversary) with a \emph{unique identifier} from a polynomially-sized set; see \cref{fig:models}.

Again, the LOCAL model is strictly stronger than the PO model. For example, maximal matching cannot be found in the PO model if the input graph is a cycle that is consistently oriented, while the task is solvable in the LOCAL model in $O(\log^* n)$ rounds.

However, it turns out that \emph{constant-time} algorithms in the LOCAL model are not much stronger than algorithms in the PO model, see e.g.\ \cite{Linear-in-Delta,goos13local-approximation}. This is the idea we will also make use of in this work: our main goal is to prove a lower bound in the LOCAL model, but it will be convenient to first study the PO model.

\begin{figure}[h!]
    \centering
    \includegraphics[page=3]{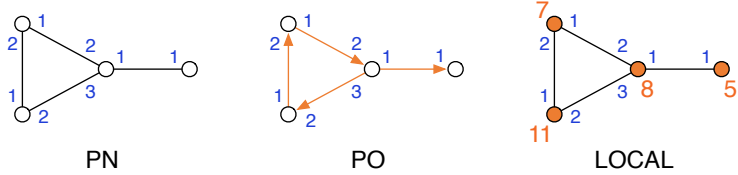}
    \caption{Models of computing used in this work.}\label{fig:models}
\end{figure}

\subsection{Applying PO algorithms to loopy graphs} \label{ssec:PO-loopy}

To prove the lower-bound result of \cref{thm:lower}, we will study the output of a PO algorithm $\mathcal{A}$ in some loopy graph $G$. However, when we consider distributed graph algorithms, we usually assume that the input graph is loop-free.

However, the output of $\mathcal{A}$ in loopy graphs is nevertheless well-defined. When we refer to the output of $\mathcal{A}$ on some edge $e$ in $G$, we refer to the result of the following thought experiment: Unfold all loops in $G$, as shown in \cref{fig:comp-with-prior}b, and hence we arrive at a tree $G'$. Then apply $\mathcal{A}$ in $G'$ (as the running time of $\mathcal{A}$ is independent of the size of the input graph, this is well-defined). Edge $e$ in $G$ corresponds to infinitely many edges $e'$ in $G'$, but each such edge is symmetric and hence the output of $\mathcal{A}$ on each such edge $e'$ is the same; hence we can take any such edge $e'$ and interpret its label as the output of $\mathcal{A}$ on $e$.

In particular, if $\mathcal{A}$ finds a maximal fractional matching in any loop-free graph $G'$, it will also produce a maximal fractional matching in the loopy graph $G$ (the label of the loop is counted twice).

\section{Lower bound result}

In this section we prove the lower-bound result, \cref{thm:lower}. It turns out that the critical resource is the number of factors of $2$ in the denominators. We start by defining sets of rational numbers that will precisely capture how fine-grained values are needed.

\subsection{Sets of rational numbers}

Any natural number $x \ge 1$ can be written as $x = 2^n \cdot m$ where $n \ge 0$ and $m \equiv 1 \mod{2}$. We refer to $e(x) = 2^n$ as the \emph{even part} of $x$ and $o(x) = m$ as the \emph{odd part} of $x$. For $x = 0$, we define $e(x) = 0$ and $o(x) = 1$.

We extend this notion to rational numbers as follows. If $x = p/q$ in the reduced form, we define the \emph{even part of the denominator} $\bar e(x) = e(q)$ and the \emph{odd part of the denominator} $\bar o(x) = o(q)$. For example, $\bar e(0/1) = \bar e(1/1) = 1$, $\bar e(1/3) = 1$ and $\bar e(1/4) = 4$.

For each $n\ge 1$, we define
\begin{align*}
    R_n &= \bigl\{ x \in \mathbb{Q} : 0 \le x \le 1 \text{ and } \bar e(x) = 2^n \bigr\}, \\
    R_{ \le n} &= R_0 \cup R_1 \cup \dotsb \cup R_n, \\
    R_{\ge n} &= R_n \cup R_{n+1} \cup \dotsb, \\
    R_{> n} &= R_{n+1} \cup R_{n+2} \cup \dotsb.
\end{align*}
For example, we have
\begin{align*}
    R_0 &= \bigl\{ \textstyle 0, 1, \frac{1}{3}, \frac{2}{3}, \frac{1}{5}, \frac{2}{5}, \frac{3}{5}, \frac{4}{5}, \dotsc \bigr\}, \\
    R_1 &= \bigl\{ \textstyle \frac{1}{2}, \frac{1}{6}, \frac{5}{6}, \dotsc \bigr\}, \\
    R_2 &= \bigl\{ \textstyle \frac{1}{4}, \frac{3}{4}, \frac{1}{12}, \frac{5}{12}, \frac{7}{12}, \frac{11}{12}, \dotsc \bigr\}.
\end{align*}
We can view $R_n$ as the set of fractional number whose denominator has exactly $n$ trailing zeros in its binary representation. Note that for each rational number $x \in [0,1]$ there exists exactly one $n$ such that $x \in R_n$. For $m < n$, we have $R_{\le m} \subsetneq R_{\le n}$.

\subsection{High-level plan}

In \cref{ssec:lower-local} we prove the following lemma, which essentially shows that we can without loss of generality focus on the PO model:
\begin{lemma}\label{lem:lower-local}
Fix a natural number $\Delta \in \mathbb{N}$. Then, for any natural number $T \in \mathbb{N}$, the following holds: if there exists a $T$-round algorithm that solves the maximal fractional matching problem using values in a set $\mathscr{R}$ in the LOCAL model on trees with maximum degree $\Delta$, then there exists a $T$-round algorithm that solves the maximal fractional matching problem using values in set $\mathscr{R}$ in the PO model for any loopy graph $G$ with maximum degree $\Delta$. 
\end{lemma}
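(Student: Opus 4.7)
My plan is to combine a Naor--Stockmeyer-style identifier-removal argument with the loop-unfolding construction from \cref{ssec:PO-loopy}. The overall strategy is to first convert the given LOCAL algorithm $\mathcal{A}$ into an equivalent PO algorithm on (loop-free) trees of maximum degree $\Delta$, and then to extend this PO algorithm to loopy graphs essentially for free by invoking the thought experiment described in \cref{ssec:PO-loopy}.

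\textbf{Step 1: Identifier removal via Ramsey.} Since the round complexity $T$ is independent of $n$ and the output set $\mathscr{R}$ is finite, the output of $\mathcal{A}$ at a node depends only on the isomorphism class of its radius-$T$ neighborhood together with the IDs assigned to those nodes. The number of distinct radius-$T$ PO-views on trees of maximum degree $\Delta$ is finite (bounded by a function of $\Delta$ and $T$). Applying Ramsey's theorem to the identifier space (in the spirit of Naor--Stockmeyer) one extracts an indiscernible set $I$ of identifiers such that, for every radius-$T$ PO-view $V$, the output of $\mathcal{A}$ at the root---when the nodes of $V$ are labeled by any distinct IDs from $I$, assigned in the canonical node-order determined by the port numbers and orientations---depends only on $V$ itself. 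This yields a well-defined function $f$ from radius-$T$ PO-views to $\mathscr{R}$. The PO algorithm is then immediate: each node collects its radius-$T$ PO-view $V_v$ in $T$ rounds (using only port numbers and orientations) and outputs $f(V_v)$.

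\textbf{Step 2: Extension to loopy graphs.} Given a loopy graph $G$ with maximum degree $\Delta$ and a PO labeling, I would apply the unfolding from \cref{ssec:PO-loopy} to obtain a loop-free tree $G'$ of maximum degree $\Delta$, assign IDs to $G'$ from $I$ in a canonical BFS order, and verify that executing $\mathcal{A}$ on this labeled $G'$ gives the same output $f(V_v)$ at every copy of every node $v \in V(G)$, by the Ramsey-indiscernibility of $I$ and the fact that all copies of $v$ share isomorphic radius-$T$ PO-views. Since $\mathcal{A}$ is assumed correct on trees, this output is a maximal fractional matching on $G'$, and by the observation at the end of \cref{ssec:PO-loopy} it descends to a maximal fractional matching on $G$, with the label of each loop counted twice.

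\textbf{Main obstacle.} The hard part is making the Ramsey-theoretic step precise. Naor--Stockmeyer directly gives only order-invariance, that is, dependence solely on the \emph{order-type} of IDs inside a $T$-ball rather than on the actual values; the additional ingredient needed is that the port numbers and orientations of a PO-view already pin down a canonical total order on the nodes of the $T$-ball, so that once IDs are drawn from $I$ in this canonical order, the remaining order-type degree of freedom disappears and the output becomes a genuine function of $V$ alone. A secondary worry is that the unfolded tree $G'$ may be infinite (each loop unfolds into a bi-infinite directed path); but this causes no trouble because $\mathcal{A}$'s $T$-round locality means only finite radius-$T$ neighborhoods enter the analysis, for which finite Ramsey suffices.
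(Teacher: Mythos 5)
Your approach is genuinely different from the paper's. The paper delegates the entire lemma to the simulation argument of \cite{Linear-in-Delta} (Sections 5.3--5.4), noting that the simulation there only ever invokes the LOCAL algorithm on tree neighborhoods and does not alter the output values; it does not re-derive the result from scratch. You instead attempt a self-contained Naor--Stockmeyer/Ramsey argument, which is a reasonable ambition, and your Step~1 (extracting an indiscernible ID set $I$ so that the output becomes order-invariant) is standard and correct as far as it goes.

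However, Step~2 has a real gap, and you have misdiagnosed the obstacle. Ramsey indiscernibility tells you that on IDs from $I$ the output at a node depends only on its $T$-ball together with the \emph{order-type} of the IDs inside it. You then assert that a single BFS ID assignment on the unfolded tree $G'$ makes $\mathcal{A}$ output $f(V_v)$ at \emph{every} copy of every $v$. That does not follow: different copies of $v$ see different order-types under a BFS labeling, even though their $T$-balls are PO-isomorphic. Concretely, take $G$ to be one node with one self-loop, so $G'$ is a bi-infinite directed path $\dotsb \to w_{-1} \to w_0 \to w_1 \to \dotsb$; with BFS IDs from $w_0$, the IDs in the $T$-ball of $w_k$ for $k \gg 0$ increase along the arrow direction, while in the $T$-ball of $w_{-k}$ they decrease along the arrow direction. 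These are distinct order-types under the canonical PO-isomorphism, so Ramsey gives no reason for $\mathcal{A}$ to produce the same output at $w_k$ and $w_{-k}$, and your claim ``the output is $f(V_v)$ at every copy'' is unjustified.

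There is a deeper issue with the ``canonical node-order determined by port numbers and orientations'' that you invoke: any such order that a node can extract from its own $T$-ball (BFS or DFS from the center, tiebreaking by ports) is necessarily root-centric, and the root-centric orders at two adjacent nodes $u,v$ are in general \emph{incompatible} on the overlap of their $T$-balls. In the directed-path example with $u = w_0$, $v = w_1$ and $T=2$: BFS from $u$ forces $w_0 < w_1 < w_{-1}$, while BFS from $v$ forces $w_1 < w_2 < w_0 < w_{-1}$, so $w_0 < w_1$ and $w_1 < w_0$ simultaneously. Hence you cannot even verify the single edge constraint between $u$ and $v$ by exhibiting an ID assignment to the radius-$(T+1)$ ball that realizes both canonical orders at once. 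To make a Ramsey-style argument go through one must instead pick an order convention that is \emph{translation-consistent} across overlapping balls (e.g.\ a convention that respects the arrow orientations), and then prove that such a convention exists on all the unfoldings that arise; the argument as written simply does not address this, and it is exactly the technical heart of why the PO model (rather than the PN model) appears in the lemma. This is also, in essence, the content of the simulation in \cite[Sections 5.3--5.4]{Linear-in-Delta} that the paper cites.
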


Then in \cref{ssec:lower-pn} we prove the following lemma, which captures exactly how fine-grained rational values are needed in the PO model:
\begin{lemma}\label{lem:lower-pn}
Fix natural number $d \in \mathbb{N}$. Then, for any natural number $T \in \mathbb{N}$, there does not exist any algorithm in the PO model that uses $T$ rounds and computes a valid solution for the maximal fractional matching problem using the values from $R_{ \le (d-1)}$ for loopy graphs in graph family $\mathcal{G}_{2d}$. 
\end{lemma}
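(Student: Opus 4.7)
The plan is to contradict the existence of $\mathcal{A}$ by iteratively constructing loopy graphs that expose successively finer-denominator values, culminating in a graph on which $\mathcal{A}$ is forced to output a value outside $R_{\le(d-1)}$. I would set up an inductive invariant: for each $k \in \{0, 1, \ldots, d-1\}$, there is a loopy graph $H_k \in \mathcal{G}_{2d}$ together with a distinguished edge $e_k$ such that any $T$-round PO algorithm solving the maximal fractional matching problem must label $e_k$ with a value $y_k$ satisfying $\bar e(y_k) \ge 2^{k+1}$. The base case takes $H_0$ to be a single vertex $v_0$ carrying $d$ self-loops, equipped with a port numbering whose automorphism group transitively permutes the loops; combining the universal-cover interpretation of \cref{ssec:PO-loopy} with maximality at $v_0$, the algorithm is forced to place value $1/(2d)$ on every loop, whose denominator has even part at least $2$.

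For the inductive step, I would build $H_{k+1}$ from $H_k$ by picking a loop at a suitable vertex $v$ and unfolding it into a directed path of length $L = \Theta(T)$, each interior vertex of which carries a copy of $v$'s remaining self-loops plus two path edges, so that the max-degree bound $2d$ is preserved. The $T$-locality of $\mathcal{A}$, together with an adversarial port numbering that renders the interior of the unfolded path locally indistinguishable from the infinite periodic cover, forces the output on every interior edge to equal the value that $\mathcal{A}$ assigns in that periodic cover. Writing the sum-maximality equations at each vertex on the path, the interior vertices yield equations of the form $2(d-1) z + w_i + w_{i+1} = 1$, while the path endpoint yields $2(d-1) y + w_1 = 1$, with one fewer path-edge term. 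Combining these linear constraints with the rationality requirement $R_{\le(d-1)}$ and carrying out a $2$-adic valuation analysis forces at least one of the newly introduced values to satisfy $\bar e \ge 2\bar e(y_k) \ge 2^{k+2}$. Iterating the invariant $d-1$ times yields a loopy graph in $\mathcal{G}_{2d}$ on which $\mathcal{A}$ must produce a value with $\bar e \ge 2^d$, contradicting the assumption that $\mathcal{A}$ restricts itself to $R_{\le(d-1)}$.

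The main obstacle is justifying the doubling step uniformly in all choices $\mathcal{A}$ might make: one has to show that every rational assignment satisfying the interior periodic equation, when paired with the boundary endpoint equation, really does demand one extra factor of $2$ in the denominator of some resulting value. The $2$-adic argument leans crucially on the fact that loops contribute $2y$ to a saturation sum while non-loop edges contribute only $y$, so stripping a single path edge at the endpoint creates an odd/even parity mismatch in the equation that can only be resolved by the algorithm committing to a strictly finer-grained rational number. I would devote most of the proof to formalizing this step carefully, in particular verifying that the length-$\Theta(T)$ unfolding genuinely pins down the interior values via locality, and that the boundary constraint cannot be evaded by the algorithm retuning values further inside the $T$-neighborhood of the endpoint.
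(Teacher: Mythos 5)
Your high-level plan coincides with the paper's: construct a sequence of loopy graphs, at each step unfold one loop into a directed path of length $\Theta(T)$, and use a $2$-adic analysis of the saturation constraints to force ever finer denominators. However, there are two genuine gaps.

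First, the base case overclaims. In the PO model the port numbers are visible to the algorithm, so the $d$ loops at $v_0$ are \emph{not} interchangeable: a loop connected to ports $1,2$ is distinguishable from one connected to ports $3,4$, and no adversarial port numbering makes the automorphism group transitive on loops. The algorithm is therefore not forced to output $1/(2d)$ on every loop. What \emph{is} true, and what the paper uses, is the much weaker fact from \cref{obs:even-denom}: the saturation equation $2a_1 + \dotsb + 2a_d = 1$ forces \emph{some} $a_i$ to lie in $R_{\ge 1}$; which one is the algorithm's choice.

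Second, and more seriously, the inductive step rests on a false locality claim. After unfolding a loop of $v$ into a path of $2T+3$ copies, only the \emph{root} node at the middle of the path has a radius-$T$ neighborhood isomorphic to the periodic cover; already the node one step from the root sees an endpoint of the path within its $T$-ball, so its output is not pinned down. Your phrasing ``forces the output on every interior edge to equal the value that $\mathcal{A}$ assigns in that periodic cover'' is therefore not available, and the schematic equations $2(d-1)z + w_i + w_{i+1} = 1$ are not justified for any vertex except the root (and even there, different loops can get different values). You acknowledge the obstacle of ``the boundary constraint cannot be evaded,'' but do not resolve it. The paper closes exactly this gap with a \emph{navigation / path-invariant} argument: starting from the root (whose two incident path edges are known to carry a value in $R_{j(i-1)}$), move away from the root maintaining the invariant that every edge on the path back to the root has $\bar e \ge 2^{j(i-1)}$; at each visited vertex, \cref{obs:even-denom} forces either some loop to have a strictly larger $\bar e$ (establishing the stronger property \textbf{P4}), or some further non-loop edge with $\bar e \ge 2^{j(i-1)}$, allowing the walk to continue. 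Since the loop-free graph is a finite tree, the walk must terminate at a leaf where there is no non-loop edge to continue to, so a loop must absorb the extra factor of~$2$. Without this chain argument the inductive step does not go through.
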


By putting together \cref{lem:lower-local} and \cref{lem:lower-pn}, we obtain:
\begin{lemma}\label{lem:lower}
Fix a natural number $d \in \mathbb{N}$. Then, for any natural number $T \in \mathbb{N}$, there does not exist any algorithm in the LOCAL model that uses $T$ rounds and computes a valid solution for the maximal fractional matching problem using the values from $R_{ \le (d-1)}$ for any tree in the graph family $\mathcal{G}_{2d}$. 
\end{lemma}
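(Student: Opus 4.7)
The plan is to derive \cref{lem:lower} as a one-line reduction by combining \cref{lem:lower-local} and \cref{lem:lower-pn} via contradiction. Suppose for contradiction that there is a $T$-round LOCAL algorithm $\mathcal{A}$ that, on every tree in $\mathcal{G}_{2d}$, outputs a valid maximal fractional matching using only values from $R_{\le (d-1)}$. Applying \cref{lem:lower-local} with $\Delta = 2d$ and $\mathscr{R} = R_{\le (d-1)}$ yields a $T$-round PO algorithm $\mathcal{A}'$ that produces a valid maximal fractional matching on every loopy graph of maximum degree $2d$, still using only values from $R_{\le (d-1)}$. This is precisely the kind of algorithm that \cref{lem:lower-pn} rules out, giving the desired contradiction.

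Essentially the only work at this step is a bookkeeping check: the value set $R_{\le (d-1)}$ and the degree bound $2d$ appearing in the hypothesis of \cref{lem:lower-local} match exactly the set and bound appearing in the conclusion of \cref{lem:lower-pn}, so the two lemmas glue together without any parameter mismatch. All the real content lives in the two component lemmas: \cref{lem:lower-local} is what lets us discard the LOCAL model's unique identifiers (at the price of strengthening the inputs to loopy graphs under PO), while \cref{lem:lower-pn} supplies the inductive construction that forces the denominator's even part to grow with $d$. I therefore do not expect any genuine obstacle at this reduction step --- the hard parts of the section are deferred to the two subsections that establish those lemmas.
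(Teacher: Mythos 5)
Your proposal matches the paper's approach exactly: the paper derives \cref{lem:lower} by directly combining \cref{lem:lower-local} (with $\Delta = 2d$ and $\mathscr{R} = R_{\le (d-1)}$) and \cref{lem:lower-pn}, with no further argument needed. Your parameter bookkeeping is correct and the reduction by contradiction is sound.
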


Now we are almost done; we just need to amplify this from $T = O(1)$ to $T = o(\log^* n)$. To this end, there are two key observations. The first observation is that \cref{lem:lower-pn} holds even when we use values from the set  $S = \{ a/b : a = 0, 1, \dotsc, 2^{d-1}$ and $b = 1, 2, \dotsc, 2^{d-1}\}$ instead of $R_{\le d-1}$, as $S$ is a finite subset of $R_{\le d-1}$. The second observation is that maximal fractional matching using values from $S$ in the graph family $\mathcal{G}_{2d}$ is a \emph{locally checkable labeling} problem (LCL), as defined by Naor and Stockmeyer \cite{naor-stockmeyer}. The recent result by Grunau, Rozhon, and Brandt \cite{LCL-Trees-Landscape} shows that an $o(\log^* n)$-round algorithm for any LCL problem $\Pi$ in trees implies an $O(1)$-round algorithm for the same problem $\Pi$. Hence we obtain as a corollary of \cref{lem:lower}:
\begin{corollary}\label{cor:lower}
    Fix a natural number $d \in \mathbb{N}$. Then there does not exist any algorithm in the LOCAL model that uses $o(\log^* n)$ rounds and computes a valid solution for the maximal fractional matching problem for any tree in the graph family $\mathcal{G}_{2d}$ using values of the form $a/b$ where $a = 0, 1, \dotsc, 2^{d-1}$ and $b = 1, 2, \dotsc, 2^{d-1}$. 
\end{corollary}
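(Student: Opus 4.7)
The plan is to combine \cref{lem:lower} with the LCL speedup theorem of Grunau, Rozhon, and Brandt \cite{LCL-Trees-Landscape}, which converts any $o(\log^* n)$-round algorithm for an LCL on trees into an $O(1)$-round algorithm. The two things I need to verify are (i) that the finite label set $S = \{a/b : a = 0,\dotsc,2^{d-1},\ b = 1,\dotsc,2^{d-1}\}$ in the corollary's hypothesis is contained in $R_{\le d-1}$, so that a lower bound against $R_{\le d-1}$ transfers to a lower bound against $S$, and (ii) that the restricted problem is indeed an LCL in the sense of Naor and Stockmeyer \cite{naor-stockmeyer}.

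For the containment, I would take any $a/b \in S$ and reduce it to $p/q$ in lowest terms. Since $q$ divides $b$, we have $q \le 2^{d-1}$; the even part $\bar e(a/b) = e(q)$ divides $q$, hence $\bar e(a/b) = 2^k$ for some $0 \le k \le d-1$, putting $a/b$ in $R_{\le d-1}$. For the LCL check, I would point out that $S$ is a finite alphabet, the maximum degree $2d$ is bounded, and the constraints defining a maximal fractional matching, namely $x[v] \le 1$ at every vertex and $x[u] = 1 \lor x[v] = 1$ on every edge, are verifiable within radius one. Hence maximal fractional matching using labels from $S$ in $\mathcal{G}_{2d}$ fits the LCL framework exactly.

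Assume for contradiction that some LOCAL algorithm $\mathcal{A}$ runs in $T(n) = o(\log^* n)$ rounds and produces a maximal fractional matching using labels from $S$ on every tree in $\mathcal{G}_{2d}$. By the speedup result of \cite{LCL-Trees-Landscape} applied to this LCL, there exists a constant $T_0$ and an algorithm $\mathcal{A}'$ that solves the same problem in $T_0$ rounds on all trees in $\mathcal{G}_{2d}$. Since $S \subseteq R_{\le d-1}$, the algorithm $\mathcal{A}'$ uses only values from $R_{\le d-1}$, which directly contradicts \cref{lem:lower} with $T = T_0$.

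The main (and essentially only) delicate point is the proper framing as an LCL: the lower bound of \cref{lem:lower} is phrased for the infinite set $R_{\le d-1}$, but the speedup theorem requires a finite alphabet. This is exactly why the corollary is stated with the finite subset $S$, and why the containment $S \subseteq R_{\le d-1}$ is the bridge that allows the contradiction. Everything else is a direct invocation of the two black-box results.
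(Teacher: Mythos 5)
Your proposal matches the paper's proof exactly: the paper also obtains \cref{cor:lower} by observing that $S$ is a finite subset of $R_{\le d-1}$, that maximal fractional matching restricted to labels in $S$ on $\mathcal{G}_{2d}$ is an LCL in the sense of Naor and Stockmeyer, and then invoking the Grunau--Rozhon--Brandt speedup on trees to reduce $o(\log^* n)$ to $O(1)$ and contradict \cref{lem:lower}. Your explicit verification of the containment $S \subseteq R_{\le d-1}$ (via the even part of the reduced denominator dividing $b \le 2^{d-1}$) is a correct filling-in of a step the paper leaves implicit.
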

And naturally the problem does not get any easier if we consider general graphs instead of trees. Our main lower bound result, \cref{thm:lower}, directly follows.

\subsection{Proof of Lemma \ref{lem:lower-local}}\label{ssec:lower-local}

In \cite{Linear-in-Delta}, a similar result is shown with two exceptions: the graph family is not restricted to trees and the edge values are not restricted to $\mathscr{R}$. However, the same proof follows when we add these restrictions. This result is a simple extension of \cite[Sections 5.3--5.4]{Linear-in-Delta}, where we can see that the simulation argument only simulates the LOCAL algorithm on certain types of neighborhoods (and these neighborhoods are actually trees) and it does not make changes in the value used for the PO model.

\subsection{Proof of Lemma \ref{lem:lower-pn}}\label{ssec:lower-pn}

\paragraph{Preliminary Observations.}

We first make a few observations regarding our problem. First recall the way in which we use loops to represent a node in the middle of a directed path (\cref{fig:comp-with-prior}).
\begin{observation}\label{obs:loop-satur}
    If a node has a loop then it must be saturated.
\end{observation}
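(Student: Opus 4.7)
The plan is to reduce to the unfolding construction introduced in \cref{ssec:PO-loopy} and exploit symmetry. Let $v$ be a node of the loopy graph $G$ that has a loop $\ell$. Unfold all loops of $G$ to obtain the (possibly infinite) loop-free tree $G'$, and let $\mathcal{A}$ produce its maximal fractional matching on $G'$. As noted in \cref{ssec:PO-loopy}, the loop $\ell$ corresponds in $G'$ to an edge $e' = \{u_1', u_2'\}$ whose two endpoints are both copies of $v$.

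The key step is to observe that $u_1'$ and $u_2'$ are PO-indistinguishable: they have the same degree, the same port numbering, the same edge orientations, and their rooted subtree structures in $G'$ are identical. Since $\mathcal{A}$ is a deterministic PO algorithm, this forces
\[
    x[u_1'] \;=\; x[u_2']
\]
in the output of $\mathcal{A}$ on $G'$. Now apply the maximality condition on the edge $e'$: at least one endpoint must be saturated, which by the equality above means \emph{both} are saturated, i.e., $x[u_1'] = x[u_2'] = 1$.

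Finally, translate this back to $G$. By the definition of how we interpret $\mathcal{A}$'s output on loopy graphs (the loop is counted twice), we have $x[v] = x[u_1'] = 1$, so $v$ is saturated in $G$ as required.

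The only subtlety is making sure the symmetry argument is airtight — that the two copies of $v$ incident to the unfolded edge really do have identical PO views. This is immediate from the construction of the unfolding (each unfolded step reproduces the port and orientation pattern of $v$ exactly), so no nontrivial obstacle is expected; the observation is essentially a direct consequence of the definitions in \cref{ssec:PO-loopy} combined with the maximality requirement.
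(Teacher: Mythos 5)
Your proof is correct and takes essentially the same approach as the paper: unfold the loop into an infinite directed path of copies of $v$, note that by PO-indistinguishability all copies have the same saturation value, and invoke maximality on an edge of the path to force that value to be $1$. The paper states this in one terse sentence (``we would have a directed path of unsaturated nodes and, in particular, edges with unsaturated endpoints''); your write-up simply spells out the symmetry argument that sentence is implicitly relying on.
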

\begin{proof}
    If a node with a loop was not saturated, we would have a directed path of unsaturated nodes and, in particular, edges with unsaturated endpoints.
\end{proof}

In a saturated node, the labels of incident edges have to sum up to $1$. The following observation captures a key property related to how the even parts of the denominators behave when rational numbers sum up to $1$.
\begin{observation}\label{obs:even-denom}
    Let $n \ge 1$ and $\frac{k}{m \cdot 2^n} \in R_n$.
    Consider the equation
    \[
        2 \ell_1 + \ldots + 2 \ell_r + x_1 + \ldots + x_{r'} + \frac{k}{m \cdot 2^n} = 1,
    \]
    where each $\ell_i$ and $x_i$ can be any non-negative rational number. Then, either $\ell_i \in R_{> n}$ or $x_i \in R_{\ge n}$ for some $i$. Put otherwise, either some $\ell_i$ has the even part of the denominator larger than $2^n$ or some $x_i$ has the even part of the denominator at least $2^n$. 
\end{observation}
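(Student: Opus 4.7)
The plan is to track the $2$-adic valuation of both sides of the equation. For a nonzero rational $q$ written in reduced form $a/b$, define $v_2(q) = v_2(a) - v_2(b) \in \mathbb{Z}$ and $v_2(0) = +\infty$. Under this convention, $q \in R_{\le n}$ iff $v_2(q) \ge -n$, $q \in R_{< n}$ iff $v_2(q) \ge -(n-1)$, $q \in R_{\ge n}$ iff $v_2(q) \le -n$, and $q \in R_{> n}$ iff $v_2(q) \le -(n+1)$. The hypothesis $\frac{k}{m \cdot 2^n} \in R_n$ means that in reduced form, $k$ and $m$ are both odd, so $v_2\!\left(\frac{k}{m \cdot 2^n}\right) = -n$.

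The argument will be by contradiction. Suppose every $\ell_i$ satisfies $\ell_i \in R_{\le n}$ (i.e.\ $v_2(\ell_i) \ge -n$) and every $x_j$ satisfies $x_j \in R_{< n}$ (i.e.\ $v_2(x_j) \ge -(n-1)$). Then each term of the sum has $2$-adic valuation at least $1-n$: for the loop terms, $v_2(2\ell_i) = 1 + v_2(\ell_i) \ge 1 - n$; for the edge terms, $v_2(x_j) \ge -(n-1) = 1 - n$. Since the $2$-adic valuation of a sum of rationals is at least the minimum valuation of the summands, the sum $S = 2\ell_1 + \dots + 2\ell_r + x_1 + \dots + x_{r'}$ satisfies $v_2(S) \ge 1 - n$.

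On the other hand, the equation forces $S = 1 - \frac{k}{m \cdot 2^n} = \frac{m \cdot 2^n - k}{m \cdot 2^n}$. Since $n \ge 1$, the numerator $m \cdot 2^n - k$ is odd (even minus odd), and the denominator has $v_2 = n$, so after reducing the fraction all of the $2^n$ in the denominator survives and $v_2(S) = -n$. But $-n < 1 - n$, contradicting $v_2(S) \ge 1 - n$. Hence at least one of the $\ell_i$ must lie in $R_{> n}$ or at least one of the $x_j$ must lie in $R_{\ge n}$.

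The only subtlety I expect is bookkeeping: making sure the $2$-adic valuation really is preserved when multiplying by $2$ and when reducing $\frac{m \cdot 2^n - k}{m \cdot 2^n}$ to lowest terms (this needs $k$ odd, which is exactly the hypothesis that $\frac{k}{m \cdot 2^n}$ was already in reduced form with even part of denominator $2^n$), and handling vanishing summands by the convention $v_2(0) = +\infty$ so that they do not affect the lower bound on $v_2(S)$. Everything else is a routine consequence of the definition of $R_n$ in terms of the $2$-adic valuation of the denominator.
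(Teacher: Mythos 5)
Your proof is correct. The core fact you use---that $v_2(a+b) \ge \min(v_2(a), v_2(b))$---is exactly the content of the paper's LCM step (``the even part of $\lcm(b_1,b_2)$ is bounded by the maximum of the even parts of $b_1$ and $b_2$''), since $\bar e(q) = 2^{\max(0,\,-v_2(q))}$ for $q\in[0,1]$; so at bottom you and the paper are running the same contradiction: all terms of the left-hand side have $2$-adic valuation at least $1-n$, but $1-\frac{k}{m\cdot 2^n}$ has valuation exactly $-n$. Where your write-up differs is cosmetic but cleaner: by observing $v_2(2\ell_i) = 1 + v_2(\ell_i)$ you treat the loop terms $2\ell_i$ and the edge terms $x_j$ uniformly in a single pass, whereas the paper first proves the statement without loops and then does a separate substitution step ($x_{r'+i}\coloneqq 2\ell_i$) to pull the loops back in. Your valuation framing also dispenses with any need to reduce $\frac{m\cdot 2^n-k}{m\cdot 2^n}$ to lowest terms, since $v_2(a/b)=v_2(a)-v_2(b)$ holds for any representation, and the convention $v_2(0)=+\infty$ correctly absorbs vanishing summands. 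Net: same argument, more standard and slightly more compact packaging.
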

\begin{proof}
First consider the equation
\[
x_1 + \ldots + x_q +\frac{k}{m \cdot 2^n} = 1
\]
in which each $x_i$ can be any non-negative rational number. We show that there exists an index $i$ for which $x_i \in R_{\ge n}$. We can rewrite it as solving the equation
\[
x_1 + \ldots + x_q =\frac{m\cdot 2^n - k}{m \cdot 2^n},
\]
where $\frac{m\cdot 2^n - k}{m \cdot 2^n} \in R_n $. If each $x_i$ had the even part of the denominator less than $2^n$, then $x_1 + \ldots + x_q$ would also have the even part of the denominator less than $2^n$. This is because when we add two rationals $\frac{a_1}{b_1}$ and $\frac{a_2}{b_2}$ we get  $$ \frac{a_1}{b_1} + \frac{a_2}{b_2} = \frac{a_1 \cdot (\ell / b_1) + a_2 \cdot (\ell / b_2)}{\ell}$$ where $\ell = \lcm(b_1,b_2)$, the least common multiple of $b_1$ and $b_2$. The even part of $\ell$ will be bounded above by the maximum of the even parts of $b_1$ and $b_2$. However, if $x_1 + \ldots + x_q$ has the even part of the denominator less than $2^n$, then it contradicts the fact that the sum equals $\frac{m\cdot 2^n - k}{m \cdot 2^n}$.

Now, in order to prove the original statement of \cref{obs:even-denom}, it is sufficient to replace $x_{r'+i}$ by $2 \ell_i$. If $x_{r'+i} \in R_{\ge n}$ then $\ell_i \in R_{> n}$. 
\end{proof}

\paragraph{Assumptions.}

We now proceed to prove \cref{lem:lower-pn} by contradiction. For the sake of contradiction we assume that when we fix a natural number $d \in \mathbb{N}$, there exists a natural number $T \in \mathbb{N}$ such that the following holds: there exists a PO algorithm $\mathcal{A}$ that solves the maximal fractional matching problem in $T$ rounds using values from the set $R_{ \le (d-1)}$ for graph family $\mathcal{G}_{2d}$.

\paragraph{Properties.}

Now, our lower bound construction observes the behavior of $\mathcal{A}$ on different kinds of graphs in $\mathcal{G}_{2d}$ to reason about the set of values that is used. We will construct a sequence of loopy graphs $G_0, G_1, \ldots, G_{d-1}$ to argue that the further we go, the more fine-grained value must be used by our algorithm.

For each $i = 0,1, \ldots d-1$, we will maintain the following properties:
\begin{enumerate}[label={\bfseries P\arabic*},leftmargin=*]
    \item $G_i \in \mathcal{G}_{2d}$.
    \item Graph $G_i$ without loops forms a tree.
    \item Each node of $G_i$ has at least $d-i$ loops.
    \item There is an integer $j(i) > i$ and a node $v_i$ in $G_i$ such that $\mathcal{A}$ labels at least one loop of $v_i$ with a rational value $x \in R_{j(i)}$.
\end{enumerate}

\paragraph{Base Case.}

Our first graph $G_0$ consists of a single node $v_0$ with $d$ oriented self loops (see \cref{fig:Gi}).

Graph $G_0$ satisfies properties \textbf{P1}, \textbf{P2} and \textbf{P3} by construction, so we now need to verify only \textbf{P4}. Consider that $\mathcal{A}$ assigns values $a_1, \ldots, a_d$ to the loops of~$v_0$. Since $v_0$ has loops, it must be saturated (recall \cref{obs:loop-satur}), and hence it must satisfy that $2 a_1 + 2 a_2 + \ldots + 2 a_d = 1$. This is equivalent to solving $a_1 + a_2 + \ldots + a_d = 1/2$ and by \cref{obs:even-denom} we know that there exists an $i$ with $a_i \in R_{\ge 1}$.

\begin{figure}[t]
    \centering
    \includegraphics[page=4]{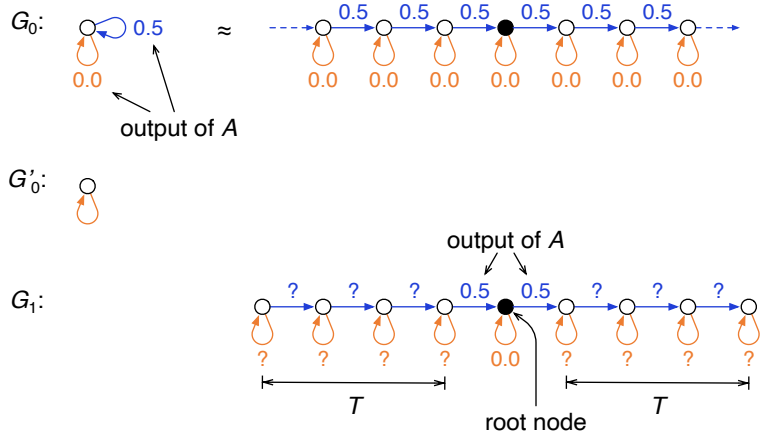}
    \caption{Construction for $d=2$ and $T=3$. Graph $G_0$ consists of $d$ self-loops. When we apply $\mathcal{A}$ to $G_0$, at least one of the loops will get labeled by a value in $R_{\ge 1}$; in this example the value was $0.5 \in R_1$. To construct $G_1$, we remove this loop to arrive at graph $G'_0$, take $2T+3$ copies of $G'_0$, and connect them with a directed path. The key observation is that given the output of $\mathcal{A}$ in $G_0$ we also know the output of $\mathcal{A}$ around the node in the middle of $G_1$---this node is called the \emph{root node} of $G_1$.}\label{fig:Gi}
\end{figure}
    
\paragraph{Inductive Step.}

Given $G_{i-1}$, we construct $G_i$ as follows; see \cref{fig:Gi}:
\begin{enumerate} [label={\bfseries S\arabic*},leftmargin=*]
    \item Construct the graph $G'_{i-1}$ from $G_{i-1}$ by removing the loop of $v_{i-1}$ for which $\mathcal{A}$ assigned a value in $R_{j(i-1)}$. 
    \item Create $2T+3$ copies of $G'_{i-1}$.
    \item For each $k = 1, 2, \dots, 2T+2$, connect node $v_{i-1}$ in copy number $k$ to node $v_{i-1}$ in copy number $k+1$; these new edges are called \emph{path edges}.
    \item Node $v_{i-1}$ in copy number $T+2$ is called the \emph{root node} of $G_i$.
\end{enumerate}

This way we form a directed path of length $2T+3$, with the root node in the middle of the path, as shown in \cref{fig:Gi}. The key observation is that the output of algorithm $\mathcal{A}$ on the root node of $G_{i}$ is the same as the output of $\mathcal{A}$ for $v_{i-1}$ in $G_{i-1}$, due to the fact that the radius-$T$ neighborhood of the root node in $G_i$ is isomorphic to the radius-$T$ neighborhoods of $v_{i-1}$ in $G_{i-1}$ (once we conceptually unfold all loops). This property is illustrated in \cref{fig:Gi}: compare the radius-$T$ neighborhood of the black node in the unfolding of $G_0$ with the radius-$T$ neighborhoods of the root node of $G_1$.

Given $G_{i-1}$ satisfies all the properties, we need to show that the same is true for $G_i$. \textbf{P1}, \textbf{P2} and \textbf{P3} are satisfied by construction. To prove \textbf{P4}, consider the root node of $G_i$. Since its behavior is completely characterized, we know that it will label the incident path edges with values from $R_{j(i-1)}$.

Recall that by \textbf{P2} graph $G_i$ without loops forms a tree. We will navigate in this tree, starting from the root node, and moving away from it until we satisfy \textbf{P4}. We maintain the following invariant; see \cref{fig:Gi-ind}:
\begin{definition}[path invariant]
    If $v$ is the current node, and $P$ is the unique path from $v$ to the root, we have already concluded that $\mathcal{A}$ labels each edge of $P$ with a value from $R_{\ge j(i-1)}$.
\end{definition}
To get started, let $e$ be one of the path edges incident to the root node, and let $v$ be the other end of $e$. As we discussed earlier, we know that $e$ is labeled with a value from $R_{j(i-1)}$.

\begin{figure}
    \centering
    \includegraphics[page=5]{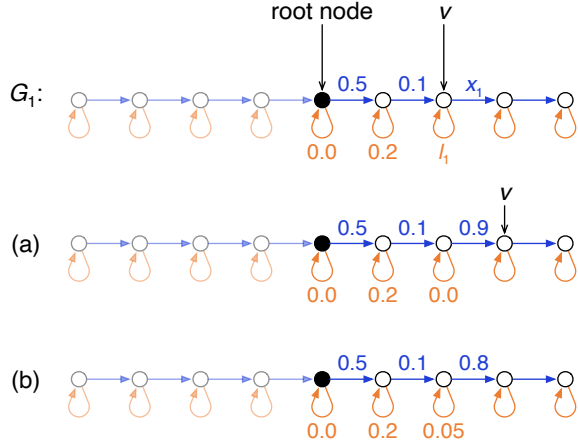}
    \caption{Inductive step in the proof of \cref{lem:lower-pn} (\cref{ssec:lower-pn}). We have already concluded that all edges in the path between $v$ and the root node are labeled with values from $R_{\ge 1}$. We now ask how algorithm $\mathcal{A}$ will label the other edges around $v$. (a)~One possible solution: edge $x_1$ is labeled with a value $0.9 = \frac{9}{2 \cdot 5} \in R_1$. We did not yet establish property \textbf{P4}, but we can extend the $R_1$-labeled path further away from the root node---eventually we will encounter a leaf node. (b)~Another possible solution: we managed to label $x_1$ with a less fine-grained value $0.8 \in R_0$. However, this means that loop $\ell_1$ is labeled with a more fine-grained value $0.05 = \frac{1}{2^2 \cdot 5} \in R_2$. We have established \textbf{P4}.}\label{fig:Gi-ind}
\end{figure}

Now assume that we have reached some node $v$ this way. Let $P$ be the path from $v$ to the root, and let $e$ be the first edge of $P$, let $L$ be the set of loops incident to $v$, and let $X$ be the set of non-loop edges incident to $v$ that are different from $e$. That is, we already know the label of edge $e$, but we do not yet know how $\mathcal{A}$ will label $L$ and $X$.

Node $v$ is loopy, so it must be saturated. The saturation condition for $v$ is equivalent to solving the equation \[2\ell_1 + \ldots + 2 \ell_r + x_1 + \ldots + x_{r'} + \frac{k}{m \cdot 2^n}  = 1,\] where $n \ge  j(i-1)$, values $\ell_i$ represent the values assigned to the loops in $L$, values $x_i$ represent the values assigned to the edges in $X$, and $\frac{k}{m \cdot 2^n}$ refers to the value from $R_{ \ge j(i-1)}$ assigned to edge $e$. With the help of \cref{obs:even-denom}, we know that one of the two cases must be true:
\begin{enumerate}
    \item One of the loops in $L$ has the even part of the denominator $2^{n'}$ for $n' > n$. In this case, we have established \textbf{P4}.
    \item One of the edges $\{u,v\} \in X$ has the even part of the denominator at least $2^{n}$. We have found another edge labeled with a value from $R_{\ge j(i-1)}$, and we can extend the path $P$ by moving from $v$ to $u$, still satisfying the path invariant.
\end{enumerate}
Note that this process will eventually terminate, as $G_i$ without loops is a (finite) tree, and hence we will eventually reach a leaf node with $X = \emptyset$. We have established that our construction of graph $G_i$ satisfies properties \textbf{P1}--\textbf{P4}.

\paragraph{Conclusion.}

When we take $i=d-1$, we have a graph $G_{d-1} \in \mathcal{G}_{2d}$ which requires the even part of the denominator to be at least $2^{d}$. However, values with denominator $2^d$ are not present in the set $R_{ \le (d-1)}$. Thus, we have our desired contradiction.

This concludes the proof of \cref{lem:lower-pn}, and hence also the proofs of \cref{lem:lower} and our main lower bound result \cref{thm:lower}.

\section{Upper bound result}

Here, we prove the statement of \cref{thm:upper}. We will use the notation
\[
    S(d) = \Bigl\{ \frac{i}{2^d} : i \in \{0,1, \ldots,2^d \} \Bigr\}.
\]
We need to show that there is a $5\Delta^3$-round distributed algorithm that solves maximal fractional matching in graph family $\mathcal{G}_{2d+1}$ using labels from $S(d)$. We prove the claim by induction, as follows:
\begin{itemize}[noitemsep]
    \item Base case (\cref{lem:upper-base-case}): $S(1)$ suffices for $\mathcal{G}_{2}$.
    \item Odd step (\cref{lem:upper-odd-case}): if $S(d)$ suffices for $\mathcal{G}_{2d}$, then $S(d)$ also suffices for $\mathcal{G}_{2d+1}$.
    \item Even step (\cref{lem:upper-even-case}): if $S(d)$ suffices for $\mathcal{G}_{2d+1}$, then $S(d+1)$ suffices for $\mathcal{G}_{2d+2}$.
\end{itemize}
We will give a PN algorithm, which implies the existence of a LOCAL algorithm.

\begin{lemma}\label{lem:upper-base-case}
    There is a $1$-round PN algorithm that finds a maximal fractional matching in $\mathcal{G}_2$ using values from $S(1)$.
\end{lemma}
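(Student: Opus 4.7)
The graphs in $\mathcal{G}_2$ are disjoint unions of paths and cycles, and the value set $S(1) = \{0, \tfrac12, 1\}$ is exactly what we need. The plan is to describe a very simple one-round rule that assigns the label of each edge as a deterministic function of the two degrees of its endpoints, and then check feasibility and maximality by cases.

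In the first and only round, every node sends its own degree to all of its neighbors; afterwards each node knows, for every incident edge $\{u,v\}$, the degree of both endpoints. The algorithm then sets
\[
    x(\{u,v\}) = \begin{cases} 1 & \text{if } \deg(u) = \deg(v) = 1, \\ \tfrac12 & \text{otherwise.} \end{cases}
\]
Since both endpoints apply the same rule to the same pair of degrees, they agree on the label; the output only uses values from $S(1)$ and is computed in one PN round.

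For feasibility, note that a node $v$ of degree $1$ has $x[v] \in \{\tfrac12, 1\}$, and a node $v$ of degree $2$ has two incident edges, neither of which qualifies for the ``both endpoints degree $1$'' case, so both are labeled $\tfrac12$ and $x[v] = 1$. For maximality, take any edge $e = \{u,v\}$. If both endpoints have degree $1$, then $x(e) = 1$ and both endpoints are saturated. Otherwise at least one endpoint, say $v$, has degree $2$, and by the previous sentence $x[v] = 1$, so $e$ has a saturated endpoint. Hence the output is a maximal fractional matching.

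The only mild subtlety to keep in mind is that an isolated edge (a $K_2$ component, where both endpoints have degree $1$) must be labeled $1$ rather than $\tfrac12$; otherwise the rule ``label everything $\tfrac12$'' would fail maximality on this component. The case distinction above handles exactly this situation, and there is no other obstacle.
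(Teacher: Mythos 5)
Your algorithm and argument are essentially identical to the paper's: both send degrees in one round and label an edge $1$ exactly when both endpoints have degree $1$, else $\frac12$, and both verify feasibility and maximality by the same case split on endpoint degrees. The only cosmetic difference is that you phrase the rule per edge while the paper phrases it per vertex; the extra remark about the $K_2$ component is a helpful sanity check but not a new idea.
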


\begin{proof}
In this case, we want to pick $x(e) \in \{ 0, \frac{1}{2}, 1 \}$ for each $e \in E$. We can achieve a simple distributed algorithm with $1$ round of communication. Each vertex $v$, communicates its degree to its neighbors. Any degree $2$ vertex can safely assign the value $\frac{1}{2}$ to both of its incident edges. For a degree $1$ vertex, it will assign the value $\frac{1}{2}$ to the incident edge if the other endpoint has degree $2$ and will assign the value $1$, if the other endpoint is $1$ as well. 

We can see that for each vertex $v$, the sum of the values assigned to its incident edges is at most $1$. By the nature of our algorithm, every degree $2$ node is saturated. So, every edge which has a degree $2$ endpoint satisfies that one of its endpoints is saturated. The only remaining scenario is when both of the endpoints are degree $1$. In this setting, our algorithm assigns the edge with value $1$ in which case both of its endpoints are saturated as well. Using $1$ round of communication, we have obtained a solution for the maximal fractional matching using values $\{ 0, \frac{1}{2}, 1 \}$ when $\Delta=2$.
\end{proof}

Let us now rephrase \cref{lem:upper-base-case} as follows so that we have a clean starting point for the induction:

\begin{corollary}\label{cor:upper-base-case}
For $d = 1$, there is a ${5\cdot(2d)^3}$-round PN algorithm that finds a maximal fractional matching in $\mathcal{G}_{2d}$ using values from $S(d)$.
\end{corollary}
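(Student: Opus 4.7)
The plan is to observe that this corollary is essentially a bookkeeping restatement of \cref{lem:upper-base-case}, rephrased so that both the graph family ($\mathcal{G}_{2d}$ rather than $\mathcal{G}_2$) and the round bound ($5\cdot(2d)^3$ rather than $1$) are written in terms of $d$. This is done so that the base case, the odd step, and the even step all speak the same language and can be chained together in the induction that follows.

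First I would plug in $d=1$ and check that the two statements match up: we have $2d=2$, so $\mathcal{G}_{2d} = \mathcal{G}_2$, and $S(d) = S(1) = \{0, \tfrac12, 1\}$, which is exactly the label set of \cref{lem:upper-base-case}. The only discrepancy is in the round complexity: \cref{lem:upper-base-case} guarantees $1$ round, while the corollary asks for $5\cdot(2\cdot1)^3 = 40$ rounds. Since $1 \le 40$, the $1$-round algorithm trivially qualifies as a $40$-round algorithm (after each node has produced its output, the remaining rounds can be spent doing nothing, or equivalently we stipulate that each node simply waits until round $40$ before announcing the output it already computed in round $1$). Correctness and the label-set constraint are preserved verbatim.

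There is no real obstacle here; the only thing to be careful about is the convention that a $T$-round PN algorithm is allowed to stop earlier than round $T$, or equivalently that extending the running time does not invalidate an algorithm. Under the standard LOCAL/PN conventions used in the paper, this is immediate. The purpose of the corollary is purely to give the induction a clean starting point, matching the template used by \cref{lem:upper-odd-case} and \cref{lem:upper-even-case}, so the proof amounts to a one-line invocation of \cref{lem:upper-base-case} together with the inequality $1 \le 5\cdot(2d)^3$ at $d=1$.
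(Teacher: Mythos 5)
Your proposal is correct and matches the paper's treatment: the paper explicitly introduces \cref{cor:upper-base-case} as a rephrasing of \cref{lem:upper-base-case} to seed the induction, and gives no further proof. Your observation that $\mathcal{G}_{2\cdot 1} = \mathcal{G}_2$, $S(1) = \{0,\tfrac12,1\}$, and $1 \le 5\cdot(2\cdot 1)^3 = 40$ (with idle rounds allowed) is exactly the intended argument.
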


Now we are ready to take the step from degree $2d$ to degree $2d+1$:

\begin{lemma}\label{lem:upper-odd-case}
Fix $d \in \mathbb{N}$. Assume that there is a ${5\cdot(2d)^3}$-round PN algorithm that finds a maximal fractional matching in $\mathcal{G}_{2d}$ using values from $S(d)$.
Then there is a ${5\cdot(2d+1)^3}$-round PN algorithm that finds a maximal fractional matching in $\mathcal{G}_{2d+1}$ using values from $S(d)$.
\end{lemma}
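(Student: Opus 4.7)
The plan is to reduce the problem on $\mathcal{G}_{2d+1}$ to the problem on $\mathcal{G}_{2d}$ by preprocessing in $O(\Delta^2)$ PN rounds and then invoking the assumed algorithm. Since $5\cdot(2d+1)^3 - 5\cdot(2d)^3 = 60d^2 + 30d + 5$, there is room for $\Theta(\Delta^2)$ additional rounds on top of the recursive call, and moreover the value set $S(d)$ is shared between the two levels, so no new fractions are introduced.

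The core idea is to find, in the preprocessing phase, a matching $M \subseteq E(G)$ that saturates every vertex of degree $2d+1$. Once $M$ is at hand, the algorithm labels each edge $e \in M$ with value $1 \in S(d)$, labels every edge with exactly one endpoint in $V(M)$ with value $0 \in S(d)$, and invokes the assumed $\mathcal{G}_{2d}$ algorithm on the induced subgraph $G[V \setminus V(M)]$ to label the remaining edges; this subgraph has maximum degree at most $2d$ because every degree-$(2d+1)$ vertex has been removed. Correctness is essentially immediate: all values lie in $S(d)$; every vertex of $V(M)$ has sum exactly $1$ and every vertex outside $V(M)$ has sum at most $1$ by the inductive guarantee; and every edge has a saturated endpoint, either from $V(M)$ (if it is in $M$ or has a $V(M)$ endpoint) or via maximality of the recursive call.

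The main obstacle is to produce such a matching $M$ in $O(\Delta^2)$ rounds of the PN model. Two difficulties must be confronted. First, the $\Omega(\log^* n)$ lower bound for maximal matching on cycles shows that matchings cannot in general be constructed by port-based local reasoning; however, we only need $M$ to saturate the degree-$(2d+1)$ vertices, which is a strictly weaker requirement than being a maximal matching. Second, a matching saturating all degree-$(2d+1)$ vertices need not exist in general, as certified by cubic graphs with no perfect matching.

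The resolution I expect, and which I view as the technical heart of the proof, is to relax $M$ to a subgraph $H$ of maximum degree at most $2$ (a linear forest) whose edge set covers every degree-$(2d+1)$ vertex; such an $H$ always exists once the maximum degree is at least $3$, and can be computed locally in $O(\Delta)$ PN rounds using port-number-based conflict resolution. Each edge of $H$ receives value $\tfrac{1}{2} \in S(d)$; vertices of $H$-degree $2$ are already saturated, so their other incident edges can safely be labeled $0$ and removed from consideration. The subtlety then lies in the vertices of $H$-degree $1$: they carry a residual capacity of $\tfrac{1}{2}$, and the recursive invocation of the $\mathcal{G}_{2d}$ algorithm must be set up on a graph where this residual is accounted for. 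Verifying that the two halves combine into a feasible, maximal solution without leaving the value set $S(d)$ is the delicate casework I would expect the proof to spend most of its space on.
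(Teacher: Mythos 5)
Your proposal diverges from the paper in a way that I believe leaves a genuine gap, and the gap is exactly where you flag the ``delicate casework.''

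The crux of the problem is the order of operations. You propose to \emph{first} assign the value $\frac12$ to every edge of a covering linear forest $H$, and \emph{then} invoke the inductive algorithm $\mathcal{A}$ on the remaining graph. But $\mathcal{A}$ is a black box that produces a fractional matching with $x[v]\le 1$ at every node; it has no mechanism for respecting a residual budget of $\frac12$ at the $H$-degree-$1$ vertices. A node $v$ with $H$-degree $1$ and full degree $2d+1$ still has up to $2d$ edges into the recursive instance, so $\mathcal{A}$ can legitimately saturate $v$ there, driving the combined load to $\frac32$. The only natural fix---scaling $\mathcal{A}$'s output by $\frac12$---moves you from $S(d)$ to $S(d+1)$, which is precisely what the odd step must avoid (that denominator bump is the job of the even step). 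There is also a second issue you gloss over: you assert $H$ ``can be computed locally in $O(\Delta)$ PN rounds,'' but you give no construction, and if $H$ contains long paths or cycles, any post-hoc repair that walks along $H$ to balance loads would not be local.

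The paper avoids both problems by inverting the order and by choosing the edges to defer more carefully. Using the port numbers, each edge $\{u,v\}$ gets a label $\ell=\{i,j\}$ from the two ports it uses, giving $\Delta^2$ classes $G_\ell$, each of which has maximum degree $2$ (paths and cycles). An edge is ``Mid'' if both of its endpoints have degree $2$ in its class, and ``End'' otherwise. The key observation is that the Mid edges already form a subgraph of maximum degree $\le 2d$, because every degree-$(2d+1)$ vertex has at least one port whose class contains it with degree $1$, hence at least one incident End edge. So the paper runs $\mathcal{A}$ \emph{first}, on the Mid subgraph, obtaining $x$ with $x[v]\le 1$ everywhere. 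Then it processes the End edges class by class, assigning each such $e=\{u,v\}$ the value $\min\{1-x[u],\,1-x[v]\}$, which is nonnegative, stays in $S(d)$ (a difference of multiples of $2^{-d}$), preserves feasibility by construction, and saturates at least one endpoint. Because the End edges of a single class form only paths of length $1$ or $2$, the greedy assignment can be sequenced locally (the middle vertex of a length-$2$ path arbitrates), giving $O(\Delta^2)$ extra rounds overall. This ``recurse, then top up'' pattern is what you were missing: it sidesteps the residual-capacity problem entirely, because the greedy fill is defined in terms of the already-fixed $x$, rather than the other way around.
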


\begin{proof}
Assume that $\mathcal{A}$ is a PN algorithm that computes the solution for $\mathcal{G}_{2d}$ using values in $S(d)$. We now describe PN algorithm $\mathcal{A}'$ that computes the solution for $G \in \mathcal{G}_{2d + 1}$ using values in $S(d)$. Algorithm $\mathcal{A}'$ takes the following steps (see \cref{fig:example} for an illustration):
\begin{figure}[t]
    \centering
    \includegraphics[page=6]{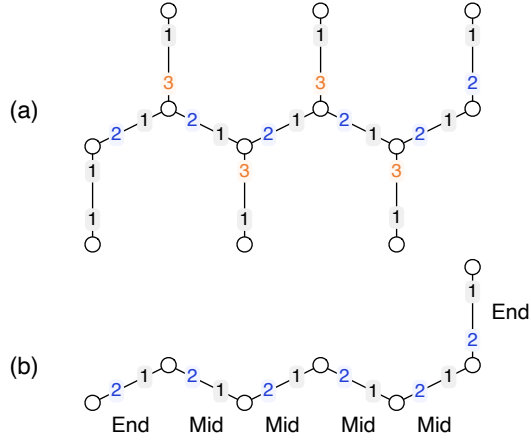}
    \caption{(a) A graph $G \in \mathcal{G}_{3}$, with a port numbering. (b) The subgraph $G_\ell$ for label $\ell = \{1,2\}$, with the edge types ``End'' and ``Mid'' indicated.}\label{fig:example}
\end{figure}
\begin{description}
    \item[Step 1: Edge labelling.]
    First, we use the port numbers to define a \emph{label} for each edge. For each edge $e = \{u,v\}$, there exists numbers $i,j \le \Delta(G)$ such that port $i$ of $u$ is connected to port $j$ of $v$. We label this edge with the set $\{i,j\}$. Then $L = \{  \{i,j\} : 1 \le i , j \le \Delta(G) \}$ denotes the set of possible edge labels. We have $|L| = \Delta^2$ different edge labels. For each $\ell \in L$, we define the subgraph $G_{\ell}$ of $G$ that contains all the edges labelled $\ell$. We write $\deg_{G_{\ell}}(v)$ for the degree of node $v$ in graph $G_\ell$. A key observation is that for each $\ell$ and $v$, we have $\deg_{G_{\ell}}(v) \le 2$, i.e., each $G_{\ell}$ is a collection of paths and cycles. Edge labelling can be done in $1$ communication round as a node $v$ just needs to communicate the relevant port for each of its incident edges.
    
    \item[Step 2: Edge Classification.]
    We classify each edge into two \emph{types}: ``Mid'' and ``End''. Consider any edge $e = \{u,v \}$ and say it had label $\ell \in L$. We say that $e$ is of type ``Mid'' if $\deg_{G_{\ell}}(u)=2$ and $\deg_{G_{\ell}}(v) = 2$. Put otherwise, all edges that are in the middle of the path or part of a cycle in $G_{\ell}$ are classified with type ``Mid''. All other edges are classified as ``End''. After step 1, this can be done in $1$ round of communication as each node just needs to communicate its degree for different edge labels. 
    
    \item[Step 3: Solve for ``Mid'' edges.]
    Consider subgraph $G'$ of $G$ that contains all edges of type ``Mid''. We argue that $\Delta(G') \le 2d$. To see this, consider any vertex $v \in V$. If $\deg_G(v) = 2d+1$, there exists $\ell \in L$ such that $\deg_{G_{\ell}}(v) = 1$, and therefore at least one edge adjacent to $v$ will receive type ``End'' and will not be part of $G'$. Now we have a subgraph $G'$ of $G$ with $\Delta(G') \le 2d$, and we can simulate $\mathcal{A}$ in $G'$. By assumption, this step takes ${5\cdot(2d)^3}$ rounds. 
    
    \item[Step 4: Extend for ``End'' edges.]
    We notice that each edge $e \in G'$ satisfies the maximality condition, i.e., at least one endpoint is saturated. Thus, we now need to ensure the same for edges of type ``End''. For a label $\ell \in L$, let $G_{\ell}^{\textrm{End}}$ be the set of edges labelled $\ell$ of type ``End''. We know that edges of type ``End'' can only be part of paths of length $1$ and $2$ in $G_{\ell}^{\textrm{End}}$. We proceed to satisfy the maximality condition for edges of type ``End'' by considering them sequentially on the labels $\ell \in L$. 
    Consider an edge $e = \{u,v\} \in G_{\ell}^{\textrm{End}}$. If we assign $x(e) = \min\{ 1- x[u] , 1 - x[v]\}$ then we can ensure that $e$ satisfies the maximality condition along with ensuring that both $u$ and $v$ satisfy the feasibility condition. The only issue that can arise here is that some other edge adjacent to $u$ or $v$ is trying to update its value in parallel with edge $e$. Since we are looking at edge of type ``End'' and proceeding sequentially based on label $\ell \in L$, the above issue can only be caused by paths of length $2$. However, the middle vertex of this path can decide the sequential order in which the two edges are considered, after which this issue is avoided. For each label $\ell$, this can be done in two communication rounds. In the first round, each vertex of unassigned edges just communicate its degree and saturation level to the neighbors. If we have a path of length $2$, the middle node will decide the values for both edges and communicate it in next round. Overall, this step takes $2 \cdot (2d+1)^2$ rounds to solve for all labels. 
\end{description}

Overall, the round complexity is
\[
    1 + 1 + 5\cdot(2d)^3 + 2\cdot(2d+1)^2 \le 5\cdot(2d+1)^3,
\]
as desired.
\end{proof}

\begin{lemma}\label{lem:upper-even-case}
Fix $d \in \mathbb{N}$. Assume that there is a ${5\cdot(2d+1)^3}$-round PN algorithm that finds a maximal fractional matching in $\mathcal{G}_{2d+1}$ using values from $S(d)$.
Then there is a ${5\cdot(2d+2)^3}$-round PN algorithm that finds a maximal fractional matching in $\mathcal{G}_{2d+2}$ using values from $S(d+1)$.
\end{lemma}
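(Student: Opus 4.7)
The plan is to mirror the structure of the odd step (\cref{lem:upper-odd-case}): identify a subgraph of $G \in \mathcal{G}_{2d+2}$ with maximum degree at most $2d+1$ where the inductive algorithm $\mathcal{A}$ can be invoked, and then extend the resulting partial matching to the removed edges using values in $S(d+1)$.

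First, I would use port numbers to select an edge set $R$ whose removal decreases the maximum degree by one. A natural choice is to include in $R$, for each vertex $v$ with $\deg(v) = 2d+2$, the edge incident to $v$ at its port $2d+2$. Then $H := G \setminus R$ satisfies $\Delta(H) \le 2d+1$, so $H \in \mathcal{G}_{2d+1}$: every max-degree vertex loses at least one incident edge, while other vertices can only lose edges. Computing $R$ takes one round of communication, after which I would invoke $\mathcal{A}$ on $H$ to obtain, in $5\cdot(2d+1)^3$ rounds, a maximal fractional matching $y$ on $H$ with values in $S(d) \subseteq S(d+1)$.

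Next, I would extend $y$ to the $R$-edges, reusing the port-pair labeling $L$ from the odd step. Each $R$-edge belongs to some label subgraph $G_\ell$ of maximum degree two, so $R \cap G_\ell$ is a collection of paths and cycles. Processing labels $\ell \in L$ sequentially, I would assign to each $R$-edge a value from $S(d+1)$ chosen so as to (i) preserve feasibility $x[v] \le 1$ at every vertex, and (ii) saturate at least one endpoint of each $R$-edge. The extension should contribute $O((2d+2)^2)$ rounds, giving total round complexity $1 + 1 + 5\cdot(2d+1)^3 + O((2d+2)^2) \le 5\cdot(2d+2)^3$.

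The main obstacle is the extension step. In the odd step, End edges within each label form paths of length at most two and can therefore be handled in two rounds per label via the middle node's decision. Here, the $R$-edges within a label may form longer paths: a consecutive run of max-degree vertices along a $G_\ell$-path can each designate their port-$(2d+2)$ edge, producing a long $R$-component. A purely greedy sequential assignment using only $S(d)$ values cannot resolve such paths in a constant number of rounds per label (this would amount to symmetry-breaking in the PN model), which is precisely why the extra precision of $S(d+1)$ is needed and why the lower bound \cref{lem:lower-pn} rules out $S(d)$ for $\mathcal{G}_{2d+2}$. The heart of the argument is to design the extension so that long $R$-components can be labelled symmetrically with $S(d+1)$ values while still satisfying both feasibility at every vertex and maximality for every $R$-edge, all within the allotted round budget.
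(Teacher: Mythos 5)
Your proposal takes a genuinely different route from the paper, and it has a gap that you have correctly identified but not closed. The paper does \emph{not} mimic the odd step. Instead, it borrows the ``almost-saturating solution'' machinery from \cite{Delta-squared}: first compute a half-integral labeling $\bar{x}\colon E \to \{0,\tfrac12,1\}$ in $O(\Delta)$ rounds such that every node with $\bar{x}[v]=0$ has all its neighbors fully saturated and every node with $\bar{x}[v]=\tfrac12$ has at least one fully saturated neighbor. The subgraph $G'$ of \emph{half-saturated} edges (both endpoints at $\tfrac12$) then automatically has maximum degree $\le 2d+1$, because each node with $\bar{x}[v]=\tfrac12$ has at least one incident edge going to a fully saturated neighbor, and that edge is not in $G'$. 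Running $\mathcal{A}$ on $G'$ yields $x'$ with values in $S(d)$, and the combination $x(e) = \bar{x}(e) + x'(e)/2$ lands in $S(d+1)$, is feasible, and is maximal---with no sequential extension phase at all.

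The gap in your approach is exactly where you flagged it: after removing one port-designated edge from each max-degree vertex, the $R$-components inside a label class $G_\ell$ can be long paths or even cycles, and you offer no concrete rule for assigning values to them. It is not enough to observe that $S(d+1)$ is ``more precise'' than $S(d)$; you would need to exhibit an assignment that is simultaneously symmetric (computable in $O(1)$ rounds per label in PN), feasible at every vertex given an \emph{arbitrary} residual profile $1 - y[v]$ left over from $\mathcal{A}$, and maximal on every $R$-edge. A single value $\epsilon$ per label cannot work because interior vertices of a long $R$-path have different residual capacities, and a per-vertex greedy choice runs into the very symmetry-breaking obstruction you cite. The paper sidesteps this entirely: by precomputing $\bar{x}$, every edge left for $\mathcal{A}$ has \emph{both} endpoints at exactly $\tfrac12$, so the residual profile is uniform and $\mathcal{A}$'s maximality guarantee translates directly into saturation after the blend $\bar{x} + x'/2$. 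To repair your approach you would essentially need to reinvent this normalization of residuals, which is the real content of the lemma.
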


\begin{proof}
The proof for this theorem uses the same ideas as done in \cite{Delta-squared}. Consider any graph $G \in \mathcal{G}_{2d+2}$ and let $\mathcal{A}$ be the PN algorithm that uses values in $S(d)$ to compute a valid solution for graphs in $\mathcal{G}_{2d+1}$. We make use of the following definitions from \cite{Delta-squared}:

\begin{definition}[almost-saturating solutions]
A half-integral fractional matching $x\colon E \to \{0,\frac{1}{2},1\}$ is almost-saturating if the following conditions hold for each node $v$:
\begin{itemize}[noitemsep]
    \item If $x[v] = 0$, then $x[u] = 1$ for all neighbors $u$ of $v$.
    \item If $x[v] = 1/2$, then $x[u] = 1$ for at least one neighbor of $v$.
\end{itemize}
\end{definition}

\begin{definition}[half-saturated edges]
Consider an almost-saturating solution $x\colon E \to \{0,\frac{1}{2},1\}$. An edge $e= \{u,v\}$ is:
\begin{itemize}[noitemsep]
    \item half-saturated if $x[u] = x[v] = 1/2$,
    \item fully-saturated if $x[u]=1$ or $x[v] = 1$.
\end{itemize}
\end{definition}
In \cite{Delta-squared} there is an algorithm that finds an almost-saturating solution in $2\Delta+1 = {2\cdot(d+2)+1}$ rounds. Let $\bar{x}$ denote the almost-saturating solution for $G$, and we let $G'$ to be the subgraph induced by the half-saturated edges; note that for each node $v$ there has to be at least one incident edge that is not half-saturated. Hence $G' \in \mathcal{G}_{2d+1}$, and we can apply $\mathcal{A}$ to produce a solution $x'$ for $G'$ using values in set $S(d)$. We can then extend domain of $x'$ to $E$ by setting $x'(e) = 0$ for $e \not\in G'$. Setting $x(e) = \bar{x}(e) + x'(e)/2$ now gives a maximal fractional matching for the graph $G$. This is because for any edge $e = \{u,v\}$ in $G'$, we have $\bar{x}[u] = \bar{x}[v] = 1/2$ and $x'[u]=1$ or $x'[v] = 1$. Moreover, $x(e) \in S(d+1)$. 

The round complexity is
\[
    5\cdot(2d+1)^3 + 2\cdot(d+2)+1 \le 5\cdot(2d+2)^3,
\]
as desired.
\end{proof}

\cref{thm:upper} now follows by putting together \cref{cor:upper-base-case,lem:upper-odd-case,lem:upper-even-case}.

\section{Conclusions}

Our results give a complete characterization of how fine-grained fractional values are needed in a distributed algorithm that finds a maximal fractional matching in any running time $T(\Delta)$ that only depends on the maximum degree $\Delta$ and is independent of $n$, or is at most $o(\log^* n)$ as a function of $n$. The main open question is if we can achieve this bound in time $T(\Delta) = O(\Delta)$---this would be optimal by \cite{Linear-in-Delta}.

\section*{Acknowledgements}

This work was supported in part by the Academy of Finland, Grant 333837. We would like to thank the anonymous reviewers for their helpful feedback, and the members of Aalto Distributed Algorithms group for discussions. This is an extended version of a preliminary conference report \cite{dahal23half-integral}.

\bibliographystyle{plainurl}
\bibliography{ref}

\end{document}